\def\moverlay{\mathpalette\mov@rlay}
\def\mov@rlay#1#2{\leavevmode\vtop{%
   \baselineskip\z@skip \lineskiplimit-\maxdimen
   \ialign{\hfil$\m@th#1##$\hfil\cr#2\crcr}}}
\newcommand{\charfusion}[3][\mathord]{
    #1{\ifx#1\mathop\vphantom{#2}\fi
        \mathpalette\mov@rlay{#2\cr#3}
      }
    \ifx#1\mathop\expandafter\displaylimits\fi}
\newcommand{\cupdot}{\charfusion[\mathbin]{\cup}{\cdot}}
\title{Opacity with Orwellian Observers
 %Verification, Synthesis and Application to  a Characterization of 
and Intransitive Non-interference\thanks{This research is supported  by the NSERC Discovery Individual Grant No. 13321 (Government of Canada) and   the FQRNT Team Grant No. 167440 (Quebec's Government)  of the first author.}}
\titlerunning{Opacity with Orwellian Observers
 %Verification, Synthesis and Application to  a Characterization of 
and Intransitive Non-interference}
\author{John Mullins\thanks{Corresponding author. Email: {\tt john.mullins@polymtl.ca}} \and Moez Yeddes\thanks{On leave from the department of Mathematics and Computer Science, INSAT, University of Carthage (Tunisia) and  funded by  the CRIAQ-NSERC RDC  Project VerITTAS (AVIO613) No. 435325-12.}}
\authorrunning{John Mullins and Moez Yeddes}
\institute{Dept. of Comp. \& Soft. Eng.Dept.,
\'Ecole Polytechnique de Montr\'eal,
%Campus of the University of Montreal,
Montreal (Quebec), Canada}
\begin{document}
\maketitle

\begin{abstract}
%The property of opacity is widely used
%in the formal verification of security in computer systems and
%protocols. 

Opacity is a general behavioural  security scheme flexible enough to account for several specific properties. 
Some secret set of   behaviors  of a system is opaque if a passive attacker can never tell whether the observed behavior  is a secret one or not.  
 % In the literature, 
%many conditions and
%many algorithms are derived for checking or enforcing opacity.
%the opacity under some constraints. If the opacity is not satisfied, then the
%behavior of the system must be modified in order to ensure its
%security. 
Instead of considering the case of static observability where the set of observable events is fixed off-line or dynamic observability where the set of observable events changes over time depending on the history of the trace, we consider {\em Orwellian} partial observability where unobservable events are not revealed  unless a downgrading event  occurs in the future of the trace. 
%{\em Orwellian} partial observability is needed to model intransitive information flow. 
%This {\em Orwellian} observability is knwon as $ipurge$ function.  
We show how to verify that some regular secret is opaque for a regular language $L$ w.r.t. an {\em Orwellian} projection while it has been proved undecidable even for a regular language $L$ w.r.t. a general  {\em Orwellian observation function}.
%Next we address two problems of {\em opacification} of  a regular secret  $\varphi$ for a regular language $K$ w.r.t. an {\em Orwellian} projection $\pi_{o,d}$: Given $K$ and $\varphi$,  the first problem  consist to compute  some minimal regular super-language  $L$ of $K$, if it exists, such that $\varphi$ is opaque for $L$ w.r.t. $\pi_{o,d}$ and the second consists to compute the supremal  sub-language $L$ of $K$  such that $\varphi$ is opaque for $L$ w.r.t. $\pi_{o,d}$. We derive both language-theoretic characterizations and algorithms   to solve these two dual problems. 
%perform these two dual types of modifications by reducing the problem to the one where the observation function is a static projection. 
We finally  illustrate relevancy of our results by proving the equivalence between the opacity property of regular secrets w.r.t. Orwellian projection and the intransitive non-interference property. 
%introduce the new Orwellian function application.

% events are partitioned into observable, unobservable and downgrading events are revealed unless 

%However to enforce opacity are enforced only in the case of an observation reduced to a static projection. How to best modify a system's behavior in the case of other observation function called Orwellian function is the subject of the current paper. 

%Two strategies are proposed: The first consists
%of enlarging the system's behavior (thus giving it more
%possibilities), and the the second consists of reducing it. In
%both cases, however, we would like to modify the behavior as
%little as possible. For the first strategy that enlarges the
%behavior, we show that the solution is not unique, that is, there are zero, one or multiple languages that satisfies the opacity property and contains the original
%language. For the second strategy that reduces the behavior, the
%solution is unique, that there exists only one supremal language that satisfies the opacity property and is contained in the original language. 
\end{abstract}

%\keyword{Information flow, Opacity, security
%policies, formal verification, Orwellian observation, downgrading, intransitive non-interference.
%}

%\end{frontmatter}

\section{Introduction}
\noindent\emph{Motivations}\,
%During the lasts decades, The security terms changes from community to another, however all researchers confirms that it causes the management of information flow between different users associated to different levels or domains of security. Many information flow security properties are introduced such non-interference, intransitive no-interference, and recently a security property called opacity. 
Opacity has been introduced in~\cite{Mazare2004a} in the context of security protocols and adapted to transition systems in~\cite{Mazare2008}. 
%Given a language capturing the behavior of a system and an observation function which have different capabilities for observing the behavior modeled by the transition system and a specific subset of runs of this transition system called the secret. The significance of this secret is that the observed behavior of a system should not be capable to capture a part of the secret. The secret is said to be opaque for the behavior language w.r.t the observation function if the observation of every traces in the language coincides with the observation of some traces outside the secret. 
Opacity is a very general security property scheme parametrized by a predicate on system executions (the secret) and an equivalence relation on system executions characterizing the intruder's observation capabilities (each equivalence class corresponds to an observable from the environment). 
The secret is opaque with respect to an observation relation if
any observation class containing a run in the secret, contains also a run that is not in the secret in such a way that an intruder observing the behavior of a system according to the observation relation cannot guarantee to infer from this observation whether the observed trace belongs to the secret or not.
 By adjusting its parameters, opacity can be instantiated to a large class of security or information flow properties (confidentiality, anonymity, non-interference)~(\cite{Mazare2008,Lin2011}).  Observation can be classified as {\em static}, {\em dynamic} and {\em Orwellian} depending on  the computational power  of an observer that it reflects. An observation relation that is {\em static} is defined  {\em a priori} reflecting an observer who always interprets the same event in the same way.  {\em Dynamic observation}  is prefix-based and corresponds to observers able  to deduce knowledge from previous events to interpret the current one that is to each prefix correspond and interpretation of the current action. {\em Orwellian observation} is trace-based and depends at any time not only upon the trace prefix but also upon the trace suffix reflecting observers able to use subsequent knowledge to re-interpret events.  Orwellian observation is required to instantiate opacity to security policies dealing, for instance, with mechanisms for  declassifying or downgrading information. The problem of opacity w.r.t. Orwellian observation is known to be undecidable even for finite transition systems and a degenerated form of opacity~\cite{Mazare2008}. Our aim  is to define a  class of Orwellian observation relations expressive enough to express most of declassification policies like, for instance, intransitive non-interference, when instantiating opacity  to observation relations  in this class, while rendering decidable  the problem of opacity of regular secrets w.r.t. this class  for finite transition systems.

\vspace{0.5em}

\noindent\emph{Contributions}\,
Our contributions are related to the study of  opacity under a class of Orwellian observation relations that  we called {\em Orwellian projections} and its relation to intransitive non-interference.
An Orwellian projection $\pi_{o,d}$ is a natural projection of a sequence of events $\Sigma^\ast$ on observable events $\Sigma_o^\ast$ unless a downgrading event in $\Sigma_d$ occurs subsequently. In this case, the prefix up to this downgrading event is left invariant by  $\pi_{o,d}$.  Our contributions are twofold: first, they provide solutions to the verification problem for opacity w.r.t. $\pi_{o,d}$ and second, they relate  opacity w.r.t. $\pi_{o,d}$ to  another concept used in the formal security community namely, intransitive non-interference (INI).

Concerning the verification problem, our first contribution is a language-theoretic characterization of opacity w.r.t.  $\pi_{o,d}$ in terms of opacity w.r.t.  natural projections (Theorem~\ref{check}).  It has to be noted that this characterization is not in itself an effective procedure since opacity verification w.r.t.
 $\pi_{o,d}$ is reduced to an infinite number of verifications of opacity w.r.t. natural projections. The first step toward an algorithm, our main contribution  to the verification problem (Theorem~\ref{effectivecheck}), is based on Theorem~\ref{check} and the construction of a transition system  incorporating the regular  secret together with the finite transition system modeling the original  system. Opacity verification w.r.t. $\pi_{o,d}$ is then reduced to the verification  of opacity of, in the worst case,  $N$ regular secrets w.r.t. natural projections for  $N$ finite transition systems where $N$ is the number of downgrading transitions of the original system. 
 
%Concerning the synthesis problems, given a secret $\varphi$ and $K$, the language of the original system, we first provide, by using Theorem~\ref{check},  language-theoretic characterizations of  some minimal super-language of $K$, $\min \pi_{super}^{\varphi}(K)$ (if it exists) (Theorem~\ref{computemin}) (reps. the (unique) supremal sub-language of $K$ (Theorem~\ref{computesuper})) such that  $\varphi$  is opaque w.r.t. $\pi_{o,d}$ for  $\min \pi_{super}^{\varphi}(K)$ (resp. $\sup \pi_{sub}^{\varphi}(K)$) in terms of minimal super-languages (resp. supremal sub-languages) of $K$ preserving opacity of secrets with natural projections. These characterizations are nevertheless not effective but can be used with Theorem~\ref{effectivecheck} to derive an effective procedure to compute $\min \pi_{super}^{\varphi}(K)$ (resp. $\sup \pi_{sub}^{\varphi}(K)$) (Theorem~\ref{mainresult}). 

Lastly, as an application that illustrates the relevancy of the notion of opacity w.r.t. Orwellian projections, we show that opacity of regular secrets w.r.t. Orwellian projection and intransitive non-interference are reducible to each other (Theorems~\ref{Op2INI} and~\ref{INI2Op}). In order to prove this, we  prove, as a building block, such a characterization of opacity of regular languages w.r.t. natural projections and non-interference (Theorem~\ref{Op2NI}), generalizing, as a side effect, a similar result obtained in~\cite{Mazare2008} for a degenerated form of opacity and non-interference.

\vspace{0.5em}

\noindent\emph{Related work}\, 
%The notion of opacity have been treated in different manner and in some cases related to the notion of non-interference of information flow systems. 
Algorithms for verifying opacity in Discrete Event Systems w.r.t. projections are presented together with applications in \cite{Darondeau2007,Takai2009,Hadji2011,Lin2011}.
In~\cite{Darondeau2007}, the authors consider a concurrent version of opacity and show that it is decidable for regular systems and secrets.
In~\cite{Takai2009},  the authors define what they called {\em secrecy } and provide  algorithms for verifying this property. A system property satisfies {\em  secrecy} if the property and its negation are state-based opaque. In~\cite{Lin2011} the author provides an algorithm for verifying state-based opacity (called {\em strong opacity})  and shows how opacity can be instantiated  to important security properties in computer systems and communication protocols, namely anonymity and secrecy. 
%However this is considered as an application of opacity and in general there are no mention if the opacity can be reduced to properties of information flows. 
%Moreover in \cite{Ben-KalefaL09} there are a formulation of enforcing opacity.
In ~\cite{Hadji2011}, the authors define the notion of K-step opacity where the system remains state-based opaque in any step up to depth-k observations that is, any observation disclosing the secret has a length greater than k. Two methods are proposed for verifying  K-step opacity.
%and propose to enforce opacity for concurrent secrets. They search for a regular control on the system but they show that an optimal control always exists but it is generally not regular.
% In In~\cite{Dub2010}, authors propose to enforce opacity by the Ramadge \& Wonham theory of supervisory control and
In~\cite{Cassez2012}, the authors introduce dynamic projections where the set of events the user can observe changes over time and show how to check that a system is opaque w.r.t. that class of dynamic observation functions. 
% the enforcing opacity is based on static and dynamic observation.  A method based on the injection of function to enforce opacity are presented in \cite{Lafortune2012}. 

{\em Non-interference} (NI) and {\em intransitive non-interference} (INI) for deterministic Mealy machines have been defined in~\cite{rushby:channel.control}. In~\cite{Pinsky1995}, an algorithm is provided for INI. A formulation of INI within the context of non-deterministic LTSs is given in~\cite{mullins00ai}, in the form of a property called {\em admissible interference} (AI), which is verified by reduction to the verification of  a stronger version than NI called {\em strong non-deterministic non-interference} (SNNI) in~\cite{focardi01classification} of $N$ finite transition systems where $N$ is the number of downgrading transitions of the original system. It coincides with INI in the case of deterministic LTS. In~\cite{Bossi04modellingdowngrading}, various notions of INI properties are considered and compared but no comparison with Rushby's original definition is provided. In \cite{Nejib2005a}, the observability theory of discrete event systems is used to formulate and provide an algorithmic approach to the INI verification problem. In \cite{Meyden07}, the author has argued that Rushby's definition of security for intransitive policies that corresponds roughly to  our notion of Orwellian projection suffers from some  flaws, and proposed some stronger variations in the context of deterministic Mealy machines. In \cite{Gorrieri2011}, the authors reformulate Rushby's definition in the setting of deterministic LTSs.  Verification  is then reduced to the verification  of an equivalent characterization  to SNNI called {\em strong non-deducibility on compositions}  (SNDC)  in~\cite{focardi01classification}  of  $N$ finite transition systems where $N$ is the number of high-level transitions in the original system. 
%In~ \cite{eike-darondeau-gorrieri2011}, INI is formulated in the framework of Place/Transition Petri Nets and proved decidable for both language and weak bisimulation equivalence. In~\cite{eike-darondeau2012}, this result is extended to selective downgrading, allowing to each downgrading action $d$ to declassify only a subset $H(d)$ of $High$.

\vspace{0.5em}

\noindent\emph{Paper organization}\,
The paper is organized as follows. The next section presents some preliminaries in labeled transition systems and opacity.
%, introduces the notion of opacity and presents the notion of disclosing a secret. 
%Section~\ref{S-opaque} formally poses our problem through the definition of minimal super-language and supremal sub-language verifying the opacity property. 
In Section~\ref{Orwelian}, we define and provide a verification algorithm of with respect to Orwellian projections. %Section~\ref{S-Algorithm} provides how we can formalize and then compute theses super-languages and sub-language. 
Finally, in section~\ref{INI} we show  the equivalence between the opacity problem w.r.t. Orwellian projections and intransitive non-interference.

% \cite{Yeddes2009} \cite{Nejib2005a} \cite{Nejib2005b} \cite{Ben-KalefaL09}

\section{Opacity in Transition Systems}
\label{S-Opacity}

\subsection{Labeled transition systems and their languages}

%$\Sigma$, $\Sigma^\ast$, $L \subseteq \Sigma^\ast$, the concatenation product $st$, the empty word $\epsilon$, let $\pi_i : \Sigma^\ast \rightarrow \Sigma_i^\ast$.

A finite deterministic {\em labeled transition system} over $\Sigma$ is a 4-tuple $G= (\Sigma,Q,\delta,q_0)$,
where $\Sigma$ denotes the alphabet of events, $Q$ denotes the state space, $\delta$ is a partial function from $Q \times \Sigma$ to $Q$, called {\em labeled transition function} and $q_0$ the {\em initial state}. 
%and $F$ a set of Final state or marked states. 
%The function  defines the dynamic of the system. When $\delta(q,\alpha)$ is not defined it means that the event $\alpha$ can not occur from the state $q$. When $\delta(q,\alpha)$ is defined, then $\delta(q,\alpha) \in Q$. 
The partial function $\delta$ is naturally  extended to a partial function $\delta : Q \times \Sigma^{*} \rightarrow Q$ defined recursively on strings in $\Sigma^{*}$ as follows: $\delta(q,\epsilon)=q$ and $\delta(q,s\cdot \alpha)=\delta(\delta(q,s),\alpha)$ where $\epsilon$ denotes the empty word and, given $s, s' \in \Sigma^\ast$, $s \cdot s'$, the concatenation of $s$ with $s'$. A state $q \in Q$ is {\em reachable} from $q'$ (or simply reachable, if $q'=q_0$)  if there is a word $s \in \Sigma^\ast$ such that $\delta(q', s) = q$. $G$ is {\em finite} if $Q$ and $\Sigma$ are. $G$ is {\em reduced} if any state in $Q$ is reachable. $G$ is complete if $\delta$ is a (total) function.
For $q \in Q$, the transition system obtained from $G$ by starting from $q$ is defined as $G^q = (\Sigma,Q,\delta,q)$. For $\Sigma' \subseteq \Sigma$, the {\em restriction} of $G$ to $\Sigma \setminus \Sigma'$ is defined as $G \setminus \Sigma' = (\Sigma \setminus \Sigma',Q,\delta',q_0)$ where $\delta' = \delta | (\Sigma \setminus \Sigma')$ that is, the restriction of $\delta$ to $\Sigma \setminus \Sigma'$.  Lastly, given two transition systems  $G= (\Sigma,Q,\delta,q_0)$ and  $G'= (\Sigma,Q',\delta',q_0^\prime)$ over the same alphabet $\Sigma$, their {\em product} is the labeled transition system $G \times G' = (\Sigma, Q \times Q', \delta \times \delta', (q_0, q_0^\prime))$ where
$( \delta \times \delta')((q,q'), \alpha) = (\delta(q, \alpha), \delta'(q', \alpha))$.

The concatenation product is extended to languages as follows: given $L, L' \subseteq \Sigma^\ast$, $L \cdot L' = \{s \cdot s' : s\in L \mbox{ and } s' \in L'\}$. $\{s\} \cdot L$ will be simply denoted by $s \cdot L$. In the sequel we let  $s^{-1}L$ denote the left quotient of $L$ by $s$ that is, the set $\{t \in \Sigma^\ast: s\cdot t \in L\}$ and $\overline{ L}$ denote the prefix-closure of $L$ that is, the set $\{ s \in \Sigma^\ast: \exists {t \in \Sigma^\ast} \mbox{ s.t. } s \cdot t \in L\}$. The {\em language of} $G = (\Sigma,Q,\delta,q_0)$ is the set of words
$$L(G)=\{s\in\Sigma^{*}: \delta(q_0,s) \mbox{ is defined}\}.$$
%For $q \in Q$, the set of words recognized from the stse $q$ is defined as:
%$$L^q(G) = \{s\in\Sigma^{*}: \delta(q,s) \mbox{ is defined}\}.$$
For $F \subseteq Q$, the set of words recognized by states in $F$ is defined as
$L_F(G)=\{s\in\Sigma^{*}: \delta(q_0,s)\in F\}$.
%%%%% Ceci est complètement faux!
%We recall that the language L(G) is prefixed closed and hence for any $L \subseteq L(G)$ is prefix-closed whereas $L_F(G)$ is not %necessarily prefix-closed. In this work examples are prefix closed, however there are no restriction of the model of $G$, we consider in %general a regular language recognized by $L_F(G)$.
%%%%%%%%%%%%

%\begin{equation}
%\label{P}
%\pi_o(\epsilon) = \epsilon, \quad \mbox{and, \ } \\
%\pi_o(s \cdot \alpha) = \left\{ \begin{array}{l@{ }l}
%\pi_o(s)\cdot \alpha \ & 
% \ \mbox{if} \ \alpha \in \Sigma_o, \\
%  \pi_o(s) &
%  \ \mbox{otherwise.}
%\end{array}\right.
%\end{equation}

\subsection{Opacity}

We consider a language $L \subseteq \Sigma^\ast$ modelling the behavior of a system and $\Sigma_{o} \subseteq \Sigma$. Opacity qualifies a predicate $\varphi$, given as a subset of $L$, with respect to an \emph{observation function} ${\cal
  O}$ from $\Sigma^\ast$ onto  set $\Sigma_{o}^\ast$ of
\emph{observables} modelling user capabilities for observing the system.  Two strings $s$ and $s'$ of $L$  are equivalent
w.r.t. ${\cal O}$ if they produce the same observable: ${\cal O}(s)
= {\cal O}(s')$.  The set ${\cal O }^{-1} (o) \cap L$ (written $[s]_{\cal O}^L$) is called an
\emph{observation class}.

\begin{definition}[Opacity]
\label{Opacity} 
Given  a language $L \subseteq \Sigma^\ast$, a language $\varphi\subseteq L$ is opaque for $L$ w.r.t. ${\cal O}$ if,
\begin{equation}
\label{cond1} \forall {s \in \varphi}, \exists  {s' \in [s]_{\cal O}^{L}}  \mbox{ s.t. } s'\notin \varphi
\end{equation}
\end{definition}

The information flow deduced by the attacker when the system is not opaque is captured by the notion of secret disclosure. 

\begin{definition} [Secret disclosure]
\label{Disclosure} 
A string $s\in \varphi$ discloses the secret $\varphi$ in $L$ w.r.t. ${\cal O}$  if $[s]_{\cal O}^{L} \subseteq \varphi$.
\end{definition}

\begin{remark}
It follows immediately from the definitions that 
a secret $\varphi\subseteq L$ opaque for $L$ w.r.t. ${\cal O}$ if and only if there is no $s \in \varphi$ disclosing the secret $\varphi$ in $L$ w.r.t. ${\cal O}$.
%\begin{prop}
%\label{discolure}
%If $s$ discloses the secret $\varphi$ in $L$ w.r.t. ${\cal O}$, then for any $ s' \in [s]_{o,d}^{L}$, s' also discloses the secret $\varphi$ in $L$ w.r.t. ${\cal O}$.
%\end{prop}
%\begin{pf}
%Since $s' \in [s]_{\cal O}^{L}$, one has  $[s']_{\cal O}^{L}=[s]_{\cal O}^{L}$. Since $s$ discloses the secret $\varphi$ in $L$, then for any ${t \in [s]_{\cal O}^{L}},  t\in \varphi$. Hence,  for any $t \in [s']_{\cal O}^{L}$,  $t\in \varphi$ and $s'$ discloses the secret $\varphi$ in $L$ w.r.t. ${\cal O}$.
%\end{pf}
Equivalently, a secret $\varphi\subseteq L$ opaque for $L$ w.r.t. ${\cal O}$ if and only if ${\cal O}(\varphi) \subseteq {\cal O}(L \setminus \varphi)$.
\end{remark}

%Observation function can be specialized to reflect partial information about strings in $L$ available to an observer. An observation function is called {\em static} if it is action-based that is if ${\cal O}(s) = {\cal O}(s_1) \ldots {\cal O}(s_n)$ for any $s_1, \ldots, s_n \in \Sigma$ and $s = s_1 \ldots s_n$.   
Natural projections  provide a class of static observation functions. Given a language $L \subseteq \Sigma^\ast$ and $\Sigma_o  \subseteq \Sigma$, the {\em natural projection} of $L$ on $\Sigma_o $ is the language $\pi_o(L)$ where $\pi_o$ is the function from $\Sigma^\ast$ to $\Sigma_o^\ast$ defined by $\pi_o(\epsilon) = \epsilon$ and
%Along the lines, we will consider a special case of observation function which is a static observation reduced to a projection function $\pi_{o}$. Let $\Sigma=\Sigma_o \cup \Sigma_{u}$ and $ P :  \Sigma^\ast \ \rightarrow \Sigma_o^\ast$ defined by:
\begin{equation}
\label{P}
\pi_o(s \cdot \alpha) = \left\{ \begin{array}{l@{ }l}
\pi_o(s)\cdot \alpha \ & 
 \ \mbox{if} \ \alpha \in \Sigma_o, \\
  \pi_o(s) &
  \ \mbox{otherwise.}
\end{array}\right.
\end{equation}

In the sequel we will denote $[s]_{\pi_o}^L$ by  $[s]_{o}^L$ for $s \in \Sigma^\ast$  and $L \subseteq \Sigma^\ast$ in order to simplify the notation.

%\begin{prop}
%$\pi_o(\varphi) \backslash (\pi_o(\varphi)\cap \pi_o(K\backslash \varphi))=\emptyset$ if and only if %$\varphi$ is opaque for K with repect to $\pi_{o}$.
%\end{prop}

%\begin{proof}
%\end{proof}

\begin{proposition}[\cite{Darondeau2007}]
Given $L$ and $\varphi$, regular, it is decidable whether $\varphi$ is opaque for $L$ w.r.t. $\pi_o$.
\end{proposition}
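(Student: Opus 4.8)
The plan is to reduce the question to a single inclusion test between two effectively constructible regular languages, via the reformulation of opacity recorded in the Remark above. Recall from that Remark that $\varphi$ is opaque for $L$ w.r.t. $\pi_o$ if and only if $\pi_o(\varphi) \subseteq \pi_o(L \setminus \varphi)$. Since $\pi_o$ is a genuine surjection of $\Sigma^\ast$ onto $\Sigma_o^\ast$, this equivalence applies verbatim with ${\cal O} = \pi_o$, so it suffices to show that both sides are regular and computable from automata for $L$ and $\varphi$, and that inclusion of regular languages is decidable.

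First I would assemble the ingredients. From finite automata recognizing $L$ and $\varphi$ (with $\varphi \subseteq L$), the language $L \setminus \varphi = L \cap (\Sigma^\ast \setminus \varphi)$ is regular by closure of the regular languages under complement and intersection, and an automaton for it is obtained by the usual product-with-complement construction. Next, the natural projection $\pi_o : \Sigma^\ast \to \Sigma_o^\ast$ defined by~(\ref{P}) is a length-nonincreasing monoid homomorphism — it erases the letters of $\Sigma \setminus \Sigma_o$ and fixes the others — and regular languages are closed under homomorphic images; concretely, an automaton for $\pi_o(M)$ is obtained from one for $M$ by relabelling every transition carrying some $\alpha \notin \Sigma_o$ as an $\epsilon$-transition and then eliminating $\epsilon$-transitions. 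Applying this to $M = \varphi$ and to $M = L \setminus \varphi$ yields finite automata for $\pi_o(\varphi)$ and $\pi_o(L \setminus \varphi)$. Finally I would decide the inclusion $\pi_o(\varphi) \subseteq \pi_o(L \setminus \varphi)$ in the standard way: determinize and complement the automaton for $\pi_o(L \setminus \varphi)$, form its product with the automaton for $\pi_o(\varphi)$, and test the result for emptiness (reachability of an accepting state). The inclusion holds iff this product accepts $\emptyset$, which settles decidability.

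As for the main obstacle: there is none of real substance — every step is a classical effective closure property of regular languages, and the only place where the cost is non-trivial is the determinization needed for complementation, which makes the procedure PSPACE in general but of course still terminating. The only points deserving a word of care are confirming that the Remark's reformulation is being used with the correct argument (the projections of $\varphi$ and of $L \setminus \varphi$, not of $L$), and observing that $\varphi \subseteq L$ is used so that $L \setminus \varphi$ is exactly the set of non-secret behaviours. An alternative, more ``operational'' route would build an observer-style subset construction on the product of the automata for $L$ and $\varphi$, flagging those observation classes contained entirely in $\varphi$ and checking that none is reachable; this also works, but the homomorphic-image argument above is the shortest path to the statement.
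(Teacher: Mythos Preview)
Your proof is correct and follows exactly the same route as the paper: use the reformulation $\pi_o(\varphi) \subseteq \pi_o(L \setminus \varphi)$, note that $L \setminus \varphi$ is regular, that $\pi_o$ is a morphism (hence preserves regularity), and that inclusion of regular languages is decidable. The paper's proof is a one-line version of what you wrote; your additional implementation details (automata constructions, determinization, the observer alternative) are sound elaborations but not part of the paper's argument.
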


\begin{proof}
$\varphi$ is opaque for $L$ w.r.t. $\pi_o$ if and only if $\pi_o(\varphi) \subseteq \pi_o(L \setminus \varphi)$. As $\varphi$ and $L$ are regular, $L \setminus \varphi$ is regular and since $\pi_o$ is a morphism and images under morphisms of regular languages are regular, this relation can be decided.
\end{proof}

\begin{example}

Let $\Sigma = \{h_1, h_2, a, b, c\}$, $\Sigma_o =  \{a, b, c\}$ and $L$, the prefix-closed language accepted by the finite automaton of Fig.~\ref{Fopacity} (where all states are accepting states). Consider the natural projection on $\Sigma_o$  as  observation function. Define the secret language  $\varphi=a^*(b^* +  c^*)$. This secret need to be not deduced by the user of the system, knowing that $h_1$ and $h_2$ are not observable. $\varphi$ is not opaque for $L$ w.r.t. $\pi_{o}$, as by observing $abb$, it is the only one in $[abb]_{{o}}^{L}$. Note that if the attacker observes only $ab$, he can not deduce whether the current sequence of actions of the system belongs to the secret since $[ab]_{{o}}^{L}=\{ab,h_2ab\}$ and $h_2ab \notin \varphi$.

\begin{figure}[!ht]
  \begin{center}
    \unitlength=4pt
    \begin{picture}(25, 26)(0,-12)
    \gasset{Nw=5,Nh=3,Nmr=0,curvedepth=0}
    \thinlines
    \node[linecolor=red!70,fillcolor=red!20](A0)(11,0){$0$}
        \imark(A0)
    \node[linecolor=red!70,fillcolor=red!20](A1)(21,0){$1$}
    \node[linecolor=red!70,fillcolor=red!20](A2)(33,10){$2$}
    \node[linecolor=red!70,fillcolor=red!20](A3)(33,-10){$3$}
    \node[linecolor=red!70,fillcolor=red!20](A4)(11,-10){$4$}
      \node[linecolor=red!70,fillcolor=red!20](A6)(11,10){$6$}
          \node[linecolor=red!70,fillcolor=red!20](A7)(1,10){$7$}
                 \node[linecolor=red!70,fillcolor=red!20](A8)(-9,10){$8$}
    \node[linecolor=red!70,fillcolor=red!20](A5)(1,-10){$5$}
       \node[linecolor=red!70,fillcolor=red!20](A9)(-9,-10){$9$}
    \drawloop[loopangle=90](A1){$a$}
        \drawloop[loopangle=0](A3){$c$}
                \drawloop[loopangle=0](A2){$b$}
                              \drawloop[loopangle=180](A9){$c$}
                                            \drawloop[loopangle=90](A5){$a$}
    \drawedge(A0,A1){$a$}
    \drawedge[ELside=r](A0,A4){$h_1$}
    
    \drawedge(A1,A2){$b$}
    \drawedge(A1,A3){$c$}
      \drawedge(A0,A6){$h_2$}

    \drawedge(A4,A5){$a$}
       \drawedge(A6,A7){$a$}
              \drawedge(A7,A8){$b$}
                  \drawedge(A5,A9){$c$}
    \end{picture}
  \end{center}
  \caption{\em A non opaque system w.r.t. $\pi_{o}$}
\label{Fopacity}
\end{figure}
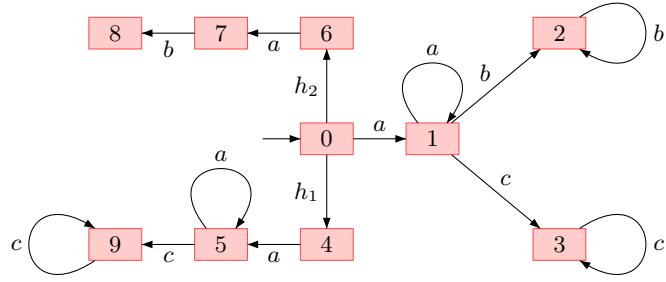

\end{example}

\section{Opacity w.r.t. Orwellian projections}
\label{Orwelian}
%So far we have assumed that the observability of events was given statically by mean of static observation functions. But observation functions can reflect observers with a more computational power. For example, {\em Dynamic observation functions} is prefix-based and correspond to observers able to to deduce knowledge from previous events to interpret the current one that is to each prefix correspond and interpretation of the current action. {Orwellian observation functions} are trace-based and depend at any time not only upon the trace prefix but also upon the trace suffix reflecting observers able to use 
%% allowing, for example to render observable some events after a given trace 
%subsequent knowledge to re-interpret events.
 In this section we consider a class of Orwellian observation functions that we call {\em Orwellian projections}. 
%Orwellian projections allows us to reveal  unobservable events provided  some downgrading event eventually occurs.
Throughout this section, we will consider languages over an alphabet $\Sigma$ partitioned into three sub-alphabets $\Sigma_o, \Sigma_{u}$ and $\Sigma_d$.  $\Sigma_o$ is a set of observable events, $\Sigma_{u}$ a set of unobservable events unless a downgrading event in  $\Sigma_d$ occurs. 
%a set of downgrading events used to downgrade unobservable events permitting the attacker to see these events after downgrading. 

\subsection{Opacity generalized  to Orwellian projections}

We now define the notion of Orwellian projection. 
%An Orwellian projection is a function that will decide to let an unobservable event to be hide unless a corresponding downgrading action occurs eventually in the future. 
%This Orwellian projection is used in the literature to characterize the information flow property intransitive non-interference called $ipurge$ function.

\begin{definition}
Let $\Sigma_o, \Sigma_d \subseteq \Sigma$. The Orwellian projection on $\Sigma_o$ unless $\Sigma_d$ is the mapping  $ \pi_{o,d} : \Sigma^\ast \rightarrow 
\Sigma^\ast$  defined   by $\pi_{o,d} (\epsilon) = \epsilon$ and
\begin{equation}
\label{O}
\pi_{o,d} (s \alpha) = \left\{ \begin{array}{l@{ }l}
s\alpha \ & 
 \ \mbox{if} \ \alpha \in \Sigma _d, \\
  \pi_{o,d} (s) \alpha  \ &
  \ \mbox{if} \ \alpha \in \Sigma _o, \\
  \pi_{o,d} (s) &
  \ \mbox{otherwise.}
\end{array}\right.
\end{equation}
\end{definition}
%Otherwise stated,  given a string $s \in \Sigma ^*$, $\pi_{o,d}(s)$ erases
%all events $\alpha$ in $s$ that are not observable ($\alpha
%\not\in \Sigma_o$) after the last downgrading event. 
%This observation function is the analog of the $ipurge$ function used in the definition of the non-interference property~\cite{Nejib2005a,Nejib2005b}. 
In  the sequel we denote $[s]_{\pi_{o,d}}^L$ by  $[s]_{o,d}^L$ for $s \in \Sigma^\ast$  and $L \subseteq \Sigma^\ast$ in order to simplify the notation.

An illustration  of the expressiveness of Orwellian projections to model information declassification or intransitive information flow arising when downgrading information is provided in the following example.

\begin{example}
\label{exemple1}
Consider the automaton $G_2$ given in Figure~\ref{Exemplefigureprincipal} with the set $\Sigma=\Sigma_o\cup\Sigma_{u}\cup\Sigma_d$ where $\Sigma_o=\{l\}$, $\Sigma_{u}=\{h\}$ and $\Sigma_d=\{d\}$. Let  L=$L(G_2)$. Consider now a secret described by the language $\varphi=\{hl\}\cup\{hdhl\}\{l\}^*$. We can check easily that $[hl ]_{{o,d}}^L=\{hl\}$,   $[hdl ]_{{o,d}}^L=\{hdl, hdhl\}$ and $[hdhl^n]_{{o,d}}^L=\{hdhl^n\}$ for any $n > 1$ . 
Hence $\varphi$ is not opaque for $L$ w.r.t.  $\pi_{o,d}$ because $hl$ is the only trace that can be observed as $l$  and $hl$ is in the secret.  Consequently, $hl$ discloses the secret $\varphi$. Moreover, as the first $h$ has been revealed  by downgrading in $hdl$ and any trace in $hdhll^*$, $hdhll^n$, for  $n \in {\mathbb N}$, is the only trace to be observed as $hdll^n$  and $hdhll^n$ is in the secret. Hence $hdhll^n$ discloses the secret also for  $n \in {\mathbb N}$.

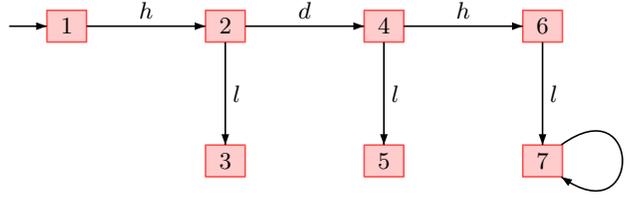
\begin{figure}[!ht]
  \begin{center}
    \unitlength=3pt
    \begin{picture}(60, 25)(-10,-4)
    \gasset{Nw=5,Nh=4,Nmr=0,linewidth=0.20,curvedepth=0}
    \thinlines
    \node[linecolor=red!70,fillcolor=red!20](A1)(-10,17){$1$}
    \imark(A1)
    \node[linecolor=red!70,fillcolor=red!20](A2)(10,17){$2$}
    \node[linecolor=red!70,fillcolor=red!20](A3)(10,0){$3$}
  %    \fmark[fangle=270](A3)
    \drawedge(A1,A2){$h$}
    \drawedge(A2,A3){$l$}
        \node[linecolor=red!70,fillcolor=red!20](A4)(30,17){$4$}
                \node[linecolor=red!70,fillcolor=red!20](A5)(30,0){$5$}
        \node[linecolor=red!70,fillcolor=red!20](A6)(50,17){$6$}
    \node[linecolor=red!70,fillcolor=red!20](A7)(50,0){$7$}
 %      \fmark[fangle=270](A7)
      \drawloop[loopangle=0](A7){$l$}
    \drawedge(A2,A4){$d$}
        \drawedge(A4,A5){$l$}
        \drawedge(A4,A6){$h$}
            \drawedge(A6,A7){$l$}
    \end{picture}
  \end{center}
  \caption{\em A non-opaque system w.r.t. $\pi_{o,d}$}
\label{Exemplefigureprincipal}
\end{figure}

\end{example}

Let us define the smallest set containing the empty string and  prefixes in $\overline{L}$  ending with a downgrading event:
%Given $L \subseteq \Sigma^\ast$, we will need first to  separate the non observable events and the observable ones after the last downgrading events. Therefore, we define the set of strings in $\overline{L$ that either is empty or ends with a downgrading events as : 
\begin{equation}
\label{Kd}
D(L)=\{\epsilon \} \cup (\overline{L} \cap \Sigma^* \Sigma_d)
\end{equation}
and for each $s\in D(L)$, its continuation  in $(\Sigma\backslash\Sigma_d)^*$:
$$C(s,L)=(\Sigma_o\cup \Sigma_{u})^* \cap s^{-1}L.$$
The following result derives directly: 

\begin{proposition}
\label{decompose}
Let   $\Sigma=\Sigma_o \cup \Sigma_{u} \cup \Sigma_d$ and $L \subseteq \Sigma^\ast$. Then any $u \in L$ admits a {\em unique factorization} $u = st$ such that $s \in  D(L)$ and $t \in  C(s,L)$. In this factorization, $s$ is the longest prefix of L terminating with a $d \in \Sigma_d$.
\end{proposition}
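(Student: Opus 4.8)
The plan is to establish existence and uniqueness of the factorization separately, both by completely elementary reasoning about the positions at which letters of $\Sigma_d$ occur in the word $u$; the partition hypothesis $\Sigma = \Sigma_o \cupdot \Sigma_{u} \cupdot \Sigma_d$ (in particular the disjointness of $\Sigma_o\cup\Sigma_{u}$ from $\Sigma_d$) is the only fact about the alphabet that is needed.

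For existence, fix $u \in L$ and consider the (finite) set of prefixes of $u$ whose last letter lies in $\Sigma_d$. If this set is empty, put $s = \epsilon$; otherwise let $s$ be its longest element. In either case $s$ is a prefix of $u \in L$, so $s \in \overline{L}$, and by construction $s \in \{\epsilon\} \cup \Sigma^\ast\Sigma_d$; hence $s \in D(L)$. Let $t$ be the unique suffix with $u = s t$. Maximality of $s$ forces every letter of $t$ to lie outside $\Sigma_d$ — a $\Sigma_d$-letter occurring in $t$ would give a strictly longer prefix of $u$ ending in $\Sigma_d$ — so $t \in (\Sigma_o\cup\Sigma_{u})^\ast$; and $s t = u \in L$ gives $t \in s^{-1}L$. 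Thus $t \in C(s,L)$, and $u = st$ is a factorization of the required form, with $s$ the longest prefix of $u$ ending in a letter of $\Sigma_d$ (or $s = \epsilon$ when $u$ contains no such letter).

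For uniqueness, suppose $u = s_1 t_1 = s_2 t_2$ with $s_i \in D(L)$ and $t_i \in C(s_i,L)$, and assume without loss of generality $|s_1| \le |s_2|$, so that $s_1$ is a prefix of $s_2$, say $s_2 = s_1 v$; then, cancelling in the free monoid, $t_1 = v t_2$, so $v$ is a prefix of $t_1$. If $v \neq \epsilon$, then $s_2 = s_1 v \neq \epsilon$, hence $s_2$ ends with a letter of $\Sigma_d$; but the last letter of $s_2$ is the last letter of $v$, which is a letter occurring in $t_1 \in (\Sigma_o\cup\Sigma_{u})^\ast$, contradicting $(\Sigma_o\cup\Sigma_{u})\cap\Sigma_d = \emptyset$. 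Hence $v = \epsilon$, i.e. $s_1 = s_2$, and then $t_1 = t_2$. This shows the factorization from the existence part is the only one, which in particular pins down $s$ as the longest prefix of $u$ terminating with a $d \in \Sigma_d$. I do not anticipate any genuine obstacle; the only points deserving care are the degenerate case $u \in (\Sigma_o\cup\Sigma_{u})^\ast$ (handled by having $\epsilon \in D(L)$ by fiat) and noting that a longest prefix with a given property exists because $u$ has only finitely many prefixes.
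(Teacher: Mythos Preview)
Your proof is correct. The paper itself does not give a proof of this proposition at all: it simply states that the result ``derives directly'' from the definitions of $D(L)$ and $C(s,L)$ and moves on. Your argument is precisely the natural unpacking of that claim---taking $s$ to be the longest prefix of $u$ ending in a $\Sigma_d$-letter (or $\epsilon$ if none exists), and using the disjointness of $\Sigma_d$ from $\Sigma_o\cup\Sigma_u$ to force uniqueness---so there is no meaningful divergence in approach, only in level of detail.
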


%\begin{prop}
%\label{decompose}
%
%Given  $\Sigma=\Sigma_o \cup \Sigma_{u} \cup \Sigma_d$, any language $L \subseteq \Sigma^\ast$ can be decomposed as follows :
%
%$$L= \bigcup _{s \in D(L)} s \cdot C(s,L) $$
%
%\end{prop}

%\begin{IEEEproof}
%
%The proof is simply by construction of $D(L)$ and  $C(s,L)$.
%
%\end{IEEEproof}

%We will also consider the projection ${\pi_{o}} : (\Sigma_o \cup \Sigma_{u})^\ast \rightarrow  \Sigma_{o}^\ast$ onto $\Sigma_{o}$ (Equation~\ref{P}).

\begin{remark}
\label{equivalence}
Note that for any $s \in D(L)$,  and $t \in C(s,L)$,  $\pi_{o,d}(st)=s{\pi_{o}}(t)$ as  $s \in \{\epsilon \} \cup  \Sigma^* \Sigma_d$ and  $u  \in (\Sigma_o\cup \Sigma_{u})^*$ and hence, $[st]^L_{o,d} \subseteq s[t]^{C(s,L)}_{o}$ (the reverse inclusion is obvious), linking in that way the Orwellian  projection $\pi_{o,d}$ and the static projection  ${\pi_{o}}$.
\end{remark}

The following theorem reduces opacity disclosure  w.r.t.  $\pi_{o,d}$ to opacity disclosure  w.r.t.  $\pi_{o}$.

\begin{theorem}
\label{check}
A secret $\varphi$ is opaque for $L$ w.r.t. $\pi_{o,d}$ if and only if forall $s \in D(\varphi)$, $C(s,\varphi)$ is opaque for $C(s,L)$ w.r.t. the projection function $\pi_{o}$.
\end{theorem}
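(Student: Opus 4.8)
The plan is to prove both directions by exploiting the unique factorization from Proposition~\ref{decompose} together with the observation recorded in Remark~\ref{equivalence}, namely that for $s \in D(L)$ and $t \in C(s,L)$ we have $\pi_{o,d}(st) = s\,\pi_o(t)$ and consequently $[st]^L_{o,d} = s\,[t]^{C(s,L)}_o$.

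\medskip
\noindent\emph{($\Leftarrow$)} Suppose that for every $s \in D(\varphi)$, the secret $C(s,\varphi)$ is opaque for $C(s,L)$ w.r.t. $\pi_o$. Let $u \in \varphi$; I want to produce $u' \in [u]^L_{o,d}$ with $u' \notin \varphi$. By Proposition~\ref{decompose} applied to $\varphi$, factor $u = st$ uniquely with $s \in D(\varphi)$ and $t \in C(s,\varphi)$. Since $C(s,\varphi)$ is opaque for $C(s,L)$ w.r.t. $\pi_o$, there is $t' \in [t]^{C(s,L)}_o$ with $t' \notin C(s,\varphi)$. Set $u' = st'$. Then $u' \in L$ because $s \in \overline{L}$ ends with a downgrading event (or $s = \epsilon$) and $t' \in C(s,L) = (\Sigma_o \cup \Sigma_u)^\ast \cap s^{-1}L$, so $st' \in L$. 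Moreover $\pi_{o,d}(u') = s\,\pi_o(t') = s\,\pi_o(t) = \pi_{o,d}(u)$, so $u' \in [u]^L_{o,d}$. It remains to check $u' \notin \varphi$: here is the first delicate point. We know $st'$ has $s$ as a prefix ending in $\Sigma_d$ and $t' \in (\Sigma_o \cup \Sigma_u)^\ast$, so if $st' \in \varphi$ then by the \emph{uniqueness} of the factorization in Proposition~\ref{decompose} (applied to $\varphi$), we would get $t' \in C(s,\varphi)$, contradicting the choice of $t'$. Hence $u' \notin \varphi$, and $\varphi$ is opaque for $L$ w.r.t. $\pi_{o,d}$.

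\medskip
\noindent\emph{($\Rightarrow$)} Conversely, assume $\varphi$ is opaque for $L$ w.r.t. $\pi_{o,d}$, fix $s \in D(\varphi)$, and let $t \in C(s,\varphi)$; I want $t' \in [t]^{C(s,L)}_o$ with $t' \notin C(s,\varphi)$. Then $u := st \in \varphi$, so by opacity there is $u' \in [u]^L_{o,d}$ with $u' \notin \varphi$. The key claim is that $u'$ must \emph{also} have $s$ as its maximal downgrading-prefix, i.e. $u' = st'$ with $t' \in (\Sigma_o \cup \Sigma_u)^\ast$; this is forced because $\pi_{o,d}(u') = \pi_{o,d}(u) = s\,\pi_o(t)$ and the definition~(\ref{O}) of $\pi_{o,d}$ copies verbatim every prefix up to and including the last $\Sigma_d$-event — so the string $\pi_{o,d}(u')$ begins with the exact sequence $s$ and contains no further $\Sigma_d$-letters, whence $u'$ itself has $s$ as a prefix and no $\Sigma_d$-letter afterwards. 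Then $t' = s^{-1}u' \in (\Sigma_o\cup\Sigma_u)^\ast \cap s^{-1}L = C(s,L)$, and $\pi_o(t') = \pi_o(t)$ from $s\,\pi_o(t') = s\,\pi_o(t)$, so $t' \in [t]^{C(s,L)}_o$. Finally $t' \notin C(s,\varphi)$: otherwise $st' = u' \in \varphi$, a contradiction. Thus $C(s,\varphi)$ is opaque for $C(s,L)$ w.r.t. $\pi_o$.

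\medskip
\noindent The main obstacle is keeping straight \emph{which} language's factorization is being invoked at each step — Proposition~\ref{decompose} is used both for $\varphi$ and for $L$, and the subtle point in the ($\Leftarrow$) direction is that $u' = st'$ lies outside $\varphi$ precisely because, were it inside, the uniqueness of the $D(\varphi)$/$C(\cdot,\varphi)$-factorization would re-derive $t' \in C(s,\varphi)$. I would state this uniqueness argument carefully as a short lemma-style remark rather than glossing over it. Everything else is a routine unwinding of the definitions of $\pi_{o,d}$, $\pi_o$, $D(\cdot)$ and $C(\cdot,\cdot)$ via Remark~\ref{equivalence}.
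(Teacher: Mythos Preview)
Your proof is correct and follows essentially the same approach as the paper's: both rely on the unique factorization of Proposition~\ref{decompose} together with the identity $[st]^L_{o,d} = s\,[t]^{C(s,L)}_o$ from Remark~\ref{equivalence}. The only cosmetic difference is that the paper argues both directions by contrapositive (via secret disclosure), whereas you give the direct proofs; your version is in fact slightly more careful in making explicit the uniqueness-of-factorization step needed to conclude $u' \notin \varphi$ (respectively $t' \notin C(s,\varphi)$), which the paper leaves implicit.
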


\begin{proof} $\Longrightarrow$: \, Suppose that there exists  $s_0 \in D(\varphi)$ such that   $C(s_0,\varphi)$ is not opaque for $C(s_0,L)$ w.r.t.  $\pi_{o}$.
%we need to prove that $\varphi$ is not opaque for $L$ w.r.t.  $\pi_{o,d}$. 
Hence there exists $t_0  \in  C(s_0,\varphi)$ s.t. $t_0$ discloses the secret  $C(s_0,\varphi)$  in $C(s_0,L)$ w.r.t. $\pi_{o}$ and, 
%$$\exists s \in D(\varphi) {\rm \ where  \ } C(s,\varphi) {\rm \ is \ not \ opaque \ for \ } C(s,L) {\rm \ w.r.t. \ } P,$$ that is,
%$$\exists s \in D(\varphi), \exists t \in  C(s,\varphi) \wedge \ t {\rm  \ discloses \ the \ secret \ } C(s,\varphi) {\rm \ in \ } C(s,L) {\rm \ w.r.t. \ }P,$$  that is,
\begin{equation}
\label{from P to O}
%\exists s \in D(\varphi), \exists t \in  C(s,\varphi) \wedge 
\forall {s'\in [t_0]_o^{ C(s_0,L)}}, s'\in  C(s_0,\varphi).
\end{equation}
%Since $s_0 \in D(\varphi)$ t \in  C(s,\varphi)$ then 
As $s_0t_0 \in {\varphi}$ 
%(Proposition~\ref{decompose}),  
%by using remark~\ref{equivalence} 
we have,
\begin{equation}
\label{O=P}
\forall {t'\in[s_0t_0]_{o,d}^L},  \pi_{o,d}(t')=\pi_{o,d}(s_0t_0)=s_0\pi_o(t_0).
\end{equation}
%From Equations~\ref{from P to O} and~\ref{O=P}, Lemma~\ref{lem:liftOtoP} and Theor~\ref{decompose} applied on $\varphi$ and $L$ we have,
Consequently,
$$ \forall { t'\in[s_0t_0]_{o,d}^L}, t'\in \varphi,$$
 that is, $s_0t_0$ discloses the secret $\varphi$ in $L$ w.r.t. $\pi_{o,d}$, and then the secret $\varphi$ is not opaque for $L$. w.r.t. $\pi_{o,d}$.

%\vspace{1em}

\noindent\textbf{$\Longleftarrow:$}\,
We need to prove if $\varphi$ is not opaque for $L$ w.r.t. $\pi_{o,d}$ implies  there exists  $s \in D(L)$ such that  $C(s,\varphi)$ is not opaque for $C(s,L)$ w.r.t.  $\pi_{o}$.

Let $t_0$ disclose the secret  $\varphi$ in $L$ w.r.t. $\pi_{o,d}$, then $t_0 = s_0u_0$ with $s_0 \in  D(\varphi)$ and $u_0 \in C(s,L_ {\varphi}) $. 
%(Proposition~\ref{decompose}). 
To prove the result, we will show that  $u_0$ discloses the secret $C(s_0,\varphi)$ in  $C(s_0,L)$ w.r.t. $\pi_{o}$.
%that is, given 
%$u' \in  [u_0]^{C(s_0,L)}_o$,  $u' \in  C(s_0,\varphi)$. 
Since $t_0$ discloses the secret  $\varphi$ in $L$ w.r.t. $\pi_{o,d}$, then  for all $t' \in [t_0]^L_{o,d}$, $t'\in \varphi$.  As $ [t_0]^L_{o,d} = s_0[u_0]^{C(s_0,L)}_o$ 
%(Lemma~\ref{lem:liftOtoP}) 
then for all $u' \in [u_0]^{C(s_0,L)}_o$,  $u' \in C(s_0,\varphi)$ and hence, $u_0$ discloses the secret $C(s_0,\varphi)$ in  $C(s_0,L)$ w.r.t. $\pi_{o}$.

%Since  $\varphi= \bigcup _{s \in D(\varphi)} s \cdot C(s,L_ {\varphi}) $, By lemma~\ref{equivalence}, $$\forall_{s \in D(L)} \forall_{u \in C(s,L)} \pi_{o,d}(su)=s\pi_o(u)$$
%
%
%
%and $L= \bigcup _{s \in D(L)} s \cdot C(s,L)$ then $D(\varphi) \subseteq D(L)$.
%
%
%
% we need to show that  that 
%~\\
%
%
%$\exists t\in \varphi$ and $t$ discloses the secret $\varphi$ in $L$ w.r.t. $\pi_{o,d}$, and since
%~\\
%
%that is,
%~\\
%
%$\exists s \in D(\varphi) \wedge \exists u \in C(s,L_ {\varphi}) \wedge$ u discloses the secrets $C(s,L_ {\varphi})$ in $C(s,L)$ w.r.t. $\pi_{o,d}$.
%
%~\\
%
%Using lemma~\ref{PvsO}, we obtain: 
%~\\
%
%$\exists s \in D(\varphi) \wedge \exists u \in C(s,L_ {\varphi}) \wedge$ u discloses the secrets $C(s,L_ {\varphi})$ in $C(s,L)$ w.r.t. $\pi_{o}$, 
%~\\
%
%that is,
%~\\
%
%$\exists s\in D(\varphi)$ and $C(s,\varphi)$ is not opaque for $C(s,L)$ w.r.t. $\pi_{o}$. 
\end{proof}

%\subsection{Incorporating the secret with the transition system}

\subsection{Checking opacity of regular secrets w.r.t. Orwellian projections}

The verification procedure is based on a product where the secret is incorporated into the transition system. Let a finite LTS $G = (\Sigma,Q,\delta,q_0)$ and $F \subseteq Q$ s.t. $L_F(G) = L$ and a regular secret $\varphi \subseteq \Sigma^\ast$ that we can consider w.l.o.g. included in $L$ (otherwise one takes $L \cap \varphi$ as secret). First, one constructs a complete deterministic transition system $G_{\varphi} = (\Sigma,Q',\delta',q_0^\prime)$ and a set $F_{\varphi} \subseteq Q'$ s.t. $L_{F_{\varphi}}(G_{\varphi}) = \varphi$. Next, we compute the product $G_\#=G \times G_{\varphi}$,  $F_\# = F \times Q'$  and ${F_{\varphi}}_{\#} = F \times F_{\varphi}$. Then we have $L_{{F_{\varphi}}_\#}(G_\#) =    \varphi$ and, because $G_{\varphi}$ is complete,  $ L_{F_\#}(G_\#) = L$.

\begin{example}
Consider the automaton $G_2= (\Sigma,Q,\delta,q_0)$ of Example~\ref{exemple1} in Figure~\ref{Exemplefigureprincipal} with $L = L(G_2)$ and $\varphi=\{hl\}\cup\{hdhl\}\{l\}^*$. Taking  ${G_2}_{\varphi}$ as the complete deterministic automaton depicted in Figure~\ref{fig:G_Sec}  with $F_{\varphi} = \{3, 7\}$, we get that ${G_2}_{\#} = G_2$,  $F_\# = Q$ and ${F_{\varphi}}_\# = \{3, 7\}$ as depicted in  Figure~\ref{Gdiese}.

\begin{figure}[!ht]
  \begin{center}
    \unitlength=3pt
    \begin{picture}(60, 34)(-10,-9)
    \gasset{Nw=5,Nh=5,Nmr=2.5,linewidth=0.20,curvedepth=0}
 %   \thinlines
    \node[linecolor=blue!70,fillcolor=blue!20](A1)(-10,21){$1$}
    \imark(A1)
    \node[linecolor=blue!70,fillcolor=blue!20](A2)(10,21){$2$}
    \node[Nw=5,Nh=4,Nmr=0,linecolor=red!70,fillcolor=red!20](A3)(10,4){$3$}
  %    \fmark[fangle=270](A3)
    \drawedge(A1,A2){$h$}
    \drawedge(A2,A3){$l$}
        \node[linecolor=blue!70,fillcolor=blue!20](A4)(30,21){$4$}
    \node[linecolor=blue!70,fillcolor=blue!20](A5)(30,4){$5$}
        \node[linecolor=blue!70,fillcolor=blue!20](A6)(50,21){$6$}
    \node[Nw=5,Nh=4,Nmr=0,linecolor=red!70,fillcolor=red!20](A7)(50,4){$7$}
 %      \fmark[fangle=270](A7)
      \drawloop[loopangle=0](A7){$l$}
            \drawloop[loopangle=270](A5){$h, d, l$}
    \drawedge(A2,A4){$d$}
        \drawedge(A4,A6){$h$}
            \drawedge(A6,A7){$l$}
                \drawedge(A4,A5){$d, l$}
                                \drawedge(A2,A5){$h$}
                               \drawedge(A7,A5){$d, h$}
                                                          \drawedge(A6,A5){$d, h$}
                                                              \drawedge(A3,A5){$h, d, l$}
    \gasset{curvedepth=-16.5}
   \drawedge(A1,A5){$d, h$}
    \gasset{curvedepth=5}

    \end{picture}
  \end{center}
  \caption{\em ${G_2}_{\varphi}$ for $\varphi=\{hl\}\cup\{hdhl\}\{l\}^*$ with $F_{\varphi} = \{3, 7\}$}
\label{fig:G_Sec}
\end{figure}
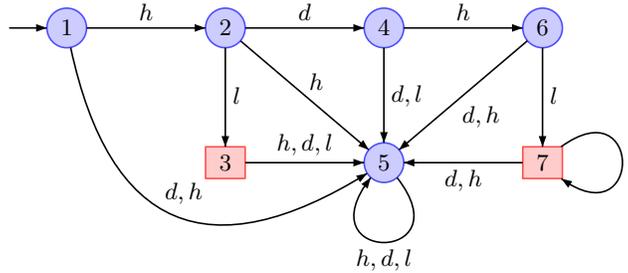

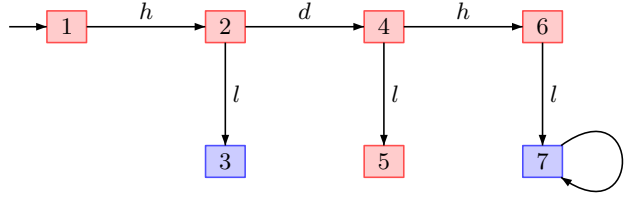
\begin{figure}[!ht]
  \begin{center}
    \unitlength=3pt
    \begin{picture}(60, 24)(-10,-3)
    \gasset{Nw=5,Nh=4,Nmr=0,linewidth=0.20,curvedepth=0}
    \thinlines
    \node[linecolor=red!70,fillcolor=red!20](A1)(-10,17){$1$}
    \imark(A1)
    \node[linecolor=red!70,fillcolor=red!20](A2)(10,17){$2$}
    \node[linecolor=blue!70,fillcolor=blue!20](A3)(10,0){$3$}
  %    \fmark[fangle=270](A3)
    \drawedge(A1,A2){$h$}
    \drawedge(A2,A3){$l$}
        \node[linecolor=red!70,fillcolor=red!20](A4)(30,17){$4$}
                \node[linecolor=red!70,fillcolor=red!20](A5)(30,0){$5$}
        \node[linecolor=red!70,fillcolor=red!20](A6)(50,17){$6$}
    \node[linecolor=blue!70,fillcolor=blue!20](A7)(50,0){$7$}
 %      \fmark[fangle=270](A7)
      \drawloop[loopangle=0](A7){$l$}
    \drawedge(A2,A4){$d$}
        \drawedge(A4,A5){$l$}
        \drawedge(A4,A6){$h$}
            \drawedge(A6,A7){$l$}
    \end{picture}
  \end{center}
    \caption{\em ${G_2}_{\#}$ with $F_\# = Q$ and ${F_{\varphi}}_\# = \{3, 7\}$  for $\varphi=\{hl\}\cup\{hdhl\}\{l\}^*$}
\label{Gdiese}
\end{figure}

\end{example}

 Hence, in the rest of this paper and w.l.o.g., we assume that  $G = G_\#$, $F = F_\#$ and $F_{\varphi} = {F_{\varphi}}_\#$ and thus $L_{F_{\varphi}}(G) = \varphi$ and $L_{F}(G) = L$. Let $Q_d = \{q_0\} \cup \{q \in Q : \exists {q' \in Q} \mbox{ s.t. } \delta(q', d) = q\}$. 
 
 We are now ready for the verification result establishing, as a consequence, the decidability of the opacity verification problem w.r.t. $\pi_{o,d}$ of regular secrets for regular languages.

\begin{theorem}
\label{effectivecheck}
$\varphi$ is opaque for $L$ w.r.t. $\pi_{o,d}$ iff for all $q \in Q_d$, $L_{F_{\varphi}}(G^q \setminus \Sigma_d)$ is opaque for $L_F(G^q \setminus \Sigma_d)$ w.r.t. $\pi_{o}$.
\end{theorem}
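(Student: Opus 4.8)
The plan is to reduce Theorem~\ref{effectivecheck} to Theorem~\ref{check} by showing that the indexing over $s \in D(\varphi)$ in Theorem~\ref{check} can be replaced by indexing over states $q \in Q_d$, and that for each such $s$ the quotient languages $C(s,\varphi)$ and $C(s,L)$ coincide with the languages $L_{F_\varphi}(G^q \setminus \Sigma_d)$ and $L_F(G^q \setminus \Sigma_d)$, where $q = \delta(q_0, s)$. First I would observe that since $G$ is deterministic (and $L_{F_\varphi}(G) = \varphi$, $L_F(G) = L$ by the standing assumption after the product construction), for any $s \in D(\varphi) \subseteq D(L)$ the state $q = \delta(q_0,s)$ is well-defined; moreover $s$ either is $\epsilon$ (giving $q = q_0$) or ends in a downgrading event, so $q \in Q_d$. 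Conversely every $q \in Q_d$ is $\delta(q_0,s)$ for some $s \in D(L)$ (either $q_0$ via $\epsilon$, or reached by a word ending in $d \in \Sigma_d$; one should note reachability of $q$ is available because $G$ can be taken reduced, or one simply restricts attention to reachable states since unreachable ones contribute nothing).

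The key computational step is the identity
\begin{equation}
\label{quotient-identity}
C(s,L) = L_F(G^q \setminus \Sigma_d) \quad\text{and}\quad C(s,\varphi) = L_{F_\varphi}(G^q \setminus \Sigma_d), \qquad q = \delta(q_0,s).
\end{equation}
Unfolding the definitions, $C(s,L) = (\Sigma_o \cup \Sigma_u)^\ast \cap s^{-1}L$, i.e. the set of $\Sigma_d$-free words $t$ with $st \in L$; by determinism $st \in L$ iff $\delta(q,t) \in F$, and $t$ being $\Sigma_d$-free means exactly that $\delta(q,t)$ is computed using only transitions of $G^q \setminus \Sigma_d$. Hence $C(s,L) = L_F(G^q \setminus \Sigma_d)$, and the same argument with $F_\varphi$ in place of $F$ gives the second half of~\eqref{quotient-identity}. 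Then Theorem~\ref{check} says $\varphi$ is opaque for $L$ w.r.t.\ $\pi_{o,d}$ iff for all $s \in D(\varphi)$, $C(s,\varphi)$ is opaque for $C(s,L)$ w.r.t.\ $\pi_o$; substituting~\eqref{quotient-identity} and using the correspondence between $s \in D(\varphi)$ and $q \in Q_d$ turns this into: for all $q \in Q_d$, $L_{F_\varphi}(G^q \setminus \Sigma_d)$ is opaque for $L_F(G^q \setminus \Sigma_d)$ w.r.t.\ $\pi_o$, which is the claim.

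There is one subtlety to handle carefully, which I expect to be the main obstacle: the two index sets are not in bijection. Several states $q \in Q_d$ may be reached by many words $s$, and, more importantly, Theorem~\ref{check} quantifies over $s \in D(\varphi)$ whereas Theorem~\ref{effectivecheck} quantifies over $q \in Q_d$, and a state $q \in Q_d$ may be reachable only by words $s \in D(L) \setminus D(\varphi)$ (i.e.\ $s \notin \overline{\varphi}$). For such a $q$ we have $C(s,\varphi) = L_{F_\varphi}(G^q \setminus \Sigma_d) \cap (\Sigma_o\cup\Sigma_u)^\ast$; if $q$ is not co-reachable to any state in $F_\varphi$ via $\Sigma_d$-free paths then this language is empty, and the empty secret is trivially opaque for any language, so the extra conditions in Theorem~\ref{effectivecheck} are vacuously satisfied and do not strengthen the statement. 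I would therefore argue the ``only if'' direction of Theorem~\ref{effectivecheck} by: given a $q \in Q_d$, pick any $s$ with $\delta(q_0,s) = q$; if $s \in D(\varphi)$ apply Theorem~\ref{check} directly via~\eqref{quotient-identity}; if $s \notin D(\varphi)$, show $L_{F_\varphi}(G^q\setminus\Sigma_d) = \varnothing$ — indeed if some $\Sigma_d$-free $t$ had $\delta(q,t)\in F_\varphi$ then $st \in \varphi$, forcing $s \in \overline{\varphi}$, and since $s = \epsilon$ or $s$ ends in $\Sigma_d$ this gives $s \in D(\varphi)$, a contradiction — so opacity holds trivially. The ``if'' direction is immediate: given $s \in D(\varphi)$, set $q = \delta(q_0,s) \in Q_d$ and invoke the hypothesis together with~\eqref{quotient-identity}. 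One should also double-check that $G^q \setminus \Sigma_d$ inherits the needed finiteness and that restricting transitions does not disturb the accepting sets $F$, $F_\varphi$ — both are routine from the definitions of $G^q$ and $G \setminus \Sigma'$ given in Section~\ref{S-Opacity}.
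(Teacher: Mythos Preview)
Your proposal is correct and follows essentially the same route as the paper: both arguments rest on Theorem~\ref{check} together with the identification $C(s,L)=L_F(G^{q}\setminus\Sigma_d)$ and $C(s,\varphi)=L_{F_\varphi}(G^{q}\setminus\Sigma_d)$ for $q=\delta(q_0,s)$. The paper runs each direction by contrapositive (a disclosing $t_0$ at some $\overline q\in Q_d$ yields $s_0t_0\in\varphi$, forcing $s_0\in D(\varphi)$), whereas you make the index-set mismatch explicit by isolating the case $L_{F_\varphi}(G^{q}\setminus\Sigma_d)=\varnothing$; these are the same argument, and your treatment of that corner case is in fact a bit cleaner than the paper's.
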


\begin{proof}
Suppose that $\varphi$ is not opaque for $L$ w.r.t. $\pi_{o,d}$ then, by Theorem~\ref{check}, for some $s_0 \in D(\varphi)$ and $t_0 \in C(s_0, \varphi)$, $t_0$ discloses the secret  $C(s_0, \varphi)$ in $C(s_0, L)$ w.r.t. $\pi_{o}$. Let $\overline{q} = \delta(q_0, s_0)$. Clearly, $\overline{q} \in Q_d$ and since $L_{F_{\varphi}}(G^{\overline{q}}\setminus \Sigma_d) = C(s_0, \varphi)$ and  $L_{F}(G^{\overline{q}}\setminus \Sigma_d) =  C(s_0, L)$, it turns out that $L_{F_{\varphi}}(G^{\overline{q}} \setminus \Sigma_d)$ is not opaque for $L_F(G^{\overline{q}} \setminus \Sigma_d)$ w.r.t. $\pi_{o}$.  

Conversely, suppose that for some ${\overline{q}} \in Q_d$,  $L_{F_{\varphi}}(G^{\overline{q}} \setminus \Sigma_d)$ is not opaque for $L_F(G^{\overline{q}} \setminus \Sigma_d)$ w.r.t. $\pi_{o}$.  Hence, there are  some $s_0 \in \Sigma^\ast \cdot \Sigma_d \cup \{\epsilon\}$ and  $t_0 \in L_{F_{\varphi}}(G^{\overline{q}} \setminus \Sigma_d)$ s.t. $\delta(q_0, s_0) = \overline{q}$ (since $\overline{q} \in Q_d$) and $t_0$ discloses the secret $L_{F_{\varphi}}(G^{\overline{q}} \setminus \Sigma_d)$ in  $L_F(G^{\overline{q}} \setminus \Sigma_d)$ w.r.t. $\pi_{o}$. Clearly,  $L_{F_{\varphi}}(G^{\overline{q}} \setminus \Sigma_d) = C(s_0, \varphi)$ and   $L_{F}(G^{\overline{q}}\setminus \Sigma_d) =  C(s_0, L)$. Since $s_0 \cdot[t_0]^{C(s_0,L)}_o = [s_0 \cdot t_0]^{L}_{o,d}$ and  $s_0 \cdot t_0 \in \varphi$, $s_0 \cdot t_0$ discloses the secret in $\varphi$ w.r.t. $\pi_{o,d}$ (Theorem~\ref{check}).
\end{proof}
\begin{example}
Consider again the LTS $G_2= (\Sigma,Q,\delta,q_0)$ of Example~\ref{exemple1} in Figure~\ref{Exemplefigureprincipal} with $L = L(G_2)$ and $\varphi=\{hl\}\cup\{hdhl\}\{l\}^*$.  In this case, $Q_d = \{1, 4\}$. $\varphi$ is not opaque for $L$ w.r.t. $\pi_{o,d}$ because $G_2^{1}\setminus{\{d\}}$ is not opaque w.r.t. $\pi_{o}$ since $hl$ discloses the secret. This reflects the case where there is no downgrading along the run (Figure~\ref{fig:fpasmith1}). But this is also the case that  $G_2^{4}\setminus{\{d\}}$ is not opaque w.r.t. $\pi_{o}$ since any sequence in $hll^*$ discloses the secret after downgrading. This reflects that sequences in $hdhll^*$ discloses the secret w.r.t. $\pi_{o,d}$ (Figure~\ref{fig:fpasmith2}).
\begin{figure}
  \centering
\subfigure[Non-opacity of $\{hl\}$ for $G_2^{1}\setminus{\{d\}}$ w.r.t. $\pi_{o}$]{\label{fig:fpasmith1}
    \unitlength=3pt
    \begin{picture}(30,25)(-10,-6)
    \gasset{Nw=5,Nh=4,Nmr=0,linewidth=0.20,curvedepth=0}
    \thinlines
    \node[linecolor=red!70,fillcolor=red!20](A1)(-10,17){$1$}
    \imark(A1)
    \node[linecolor=red!70,fillcolor=red!20](A2)(10,17){$2$}
    \node[linecolor=blue!70,fillcolor=blue!20](A3)(10,0){$3$}
    \drawedge(A1,A2){$h$}
    \drawedge(A2,A3){$l$}
    \end{picture}
}
\subfigure[Non-opacity of $\{hll^\ast\}$ for $G_2^{4}\setminus{\{d\}}$ w.r.t. $\pi_{o}$]{\label{fig:fpasmith2}
    \unitlength=3pt 
    \begin{picture}(30,25)(26,-6)
    \gasset{Nw=5,Nh=4,Nmr=0,linewidth=0.20,curvedepth=0}
    \thinlines
        \node[linecolor=red!70,fillcolor=red!20](A4)(30,17){$4$}
              \imark(A4)
                \node[linecolor=red!70,fillcolor=red!20](A5)(30,0){$5$}
        \node[linecolor=red!70,fillcolor=red!20](A6)(50,17){$6$}
    \node[linecolor=blue!70,fillcolor=blue!20](A7)(50,0){$7$}
      \drawloop[loopangle=0](A7){$l$}
        \drawedge(A4,A5){$l$}
        \drawedge(A4,A6){$h$}
            \drawedge(A6,A7){$l$}
    \end{picture}
}
\caption{Opacity of $\varphi=\{hl\}\cup\{hdhl\}\{l\}^*$ for $G_2$ w.r.t. $\pi_{o,d}$.}
\label{fig:exsmithfpa}
\end{figure}
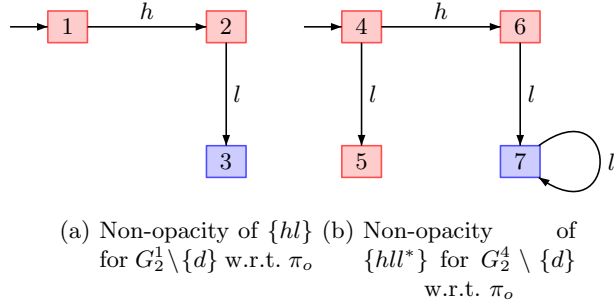
\end{example}

\section{Opacity w.r.t. Orwellian Projection and Intransitive Non-Interference}
\label{INI}

In this section, we  show how our notion of opacity w.r.t. Orwellian projection relates to transitive and intransitive non-interference. For (transitive) non-interference, the alphabet of events  $\Sigma$ is partitioned into  two sets, $High$ (private actions) and $Low$ (public actions). A system is {\em non-interferent} if it is not possible for a public observer to  infer information from the public actions about the presence of private actions in the original run (See~\cite{rushby:channel.control} and~\cite{focardi01classification} for a discussion on non-interference).

\begin{definition}
Let a  LTS  $G= (\Sigma,Q,\delta,q_0)$  and $F  \subseteq Q$ s.t. $L = L_F(G)$ then $L$  satisfies {\em non-interference} (NI) if $\pi_{Low}(L)\subseteq L$.
\end{definition}

For intransitive non-interference, $\Sigma$ is partitioned into three sets $High$ (private actions), $Low$ (public actions) and $Down$ (downgrading actions). A system is intransitive non-interferent if it non-interferent unless a downgrading action occurs and discloses all private actions encountered so far. A discussion on intransitive non-interference can be found in~\cite{rushby:channel.control,Bossi04modellingdowngrading} and~\cite{Meyden07}.
\begin{definition}
Let a  LTS  $G= (\Sigma,Q,\delta,q_0)$  and $F  \subseteq Q$ s.t. $L = L_F(G)$ then $L$  satisfies {\em intransitive non-interference} (INI) if $\pi_{High, Down}(L)\subseteq L$.
\end{definition}

The following result reduces the INI verification problem  to $N$ instances of the NI verification problem where $N$ is the number of downgrading transitions in $G$.
\begin{proposition}
\label{INI2NI}
Let a  LTS  $G= (\Sigma,Q,\delta,q_0)$  and $F  \subseteq Q$ s.t. $L = L_F(G)$.  $L$ satisfies INI iff for all $q \in Q_d$, $L_F(G^q \setminus Down)$ satisfies NI.
\end{proposition}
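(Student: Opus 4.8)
The plan is to read the statement as the specialization to the trivial secret $\varphi=L$ of the reduction already carried out for opacity, and to reuse the decomposition machinery of Proposition~\ref{decompose} and Remark~\ref{equivalence}. Concretely, I would take the public actions $Low$ as the observable alphabet, the private actions $High$ as the unobservable one, and $Down$ as the downgrading alphabet; then NI of a language $M\subseteq(Low\cup High)^\ast$ is exactly $\pi_{Low}(M)\subseteq M$, and INI of $L$ is exactly the Orwellian inclusion $\pi_{o,d}(L)\subseteq L$ — that is, opacity of $L$ for $L$ w.r.t.\ $\pi_{o,d}$, since ``disclosure of the full language'' is nothing but violation of that inclusion. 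So the Proposition is Theorem~\ref{check}/Theorem~\ref{effectivecheck} instantiated at $\varphi=L$, where $D(\varphi)=D(L)$ and $C(s,\varphi)=C(s,L)$ for every $s$; the only thing to check is that this instantiation indeed unfolds to the NI inclusions for the residual machines $G^q\setminus Down$.

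I would then spell out that unfolding directly rather than invoking Theorem~\ref{effectivecheck} as a black box. By Proposition~\ref{decompose} every $u\in L$ factors uniquely as $u=st$ with $s\in D(L)$ and $t\in C(s,L)\subseteq(Low\cup High)^\ast$, and by Remark~\ref{equivalence} we have $\pi_{o,d}(u)=s\,\pi_{Low}(t)$. Since $\pi_{Low}(t)\in Low^\ast\subseteq(Low\cup High)^\ast$, membership $s\,\pi_{Low}(t)\in L$ is equivalent to $\pi_{Low}(t)\in s^{-1}L\cap(Low\cup High)^\ast=C(s,L)$. Hence $\pi_{o,d}(L)\subseteq L$ holds iff $\pi_{Low}(C(s,L))\subseteq C(s,L)$ for every $s\in D(L)$. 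Finally, exactly as in the construction preceding Theorem~\ref{effectivecheck}, $C(s,L)=L_F\bigl(G^{\delta(q_0,s)}\setminus Down\bigr)$, and as $s$ ranges over $D(L)$ the state $\delta(q_0,s)$ ranges over the reachable states of $Q_d$; so the family of residual inclusions above is precisely ``$L_F(G^q\setminus Down)$ satisfies NI for all $q\in Q_d$'', which gives the equivalence.

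The step I expect to require the most care is the correspondence between $D(L)$ and $Q_d$. For the forward direction one only needs that a prefix of $L$ whose last event is a $Down$-transition reaches a state of $Q_d$, which is immediate from $Q_d=\{q_0\}\cup\{q:\exists q'.\,\delta(q',d)=q\}$. For the converse one needs that every reachable $q\in Q_d$ equals $\delta(q_0,s)$ for some $s\in D(L)$, and that $L_F(G^q\setminus Down)$ then coincides with $C(s,L)$ — routine under the standing hypotheses ($G$ reduced, $L_F(G)=L$), since if $q$ cannot reach an $F$-state then $L_F(G^q\setminus Down)=\emptyset$ and the corresponding NI instance is vacuously true, while if it can, the very word witnessing co-reachability, extended after the last $Down$, yields a suitable $s\in D(L)$. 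Everything else is the same bookkeeping already done for Theorems~\ref{check} and~\ref{effectivecheck}, now trivialized by taking the secret to be $L$ itself, so I anticipate no further obstacle.
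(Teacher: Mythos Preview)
Your direct argument in the second and third paragraphs is correct and is essentially the proof one would write: factor via Proposition~\ref{decompose}, use Remark~\ref{equivalence} to get $\pi_{o,d}(st)=s\,\pi_{Low}(t)$, observe that $s\,\pi_{Low}(t)\in L$ is equivalent to $\pi_{Low}(t)\in C(s,L)$, and then match the residual inclusions with the languages $L_F(G^q\setminus Down)$ as $q$ ranges over~$Q_d$. Your handling of the correspondence between $D(L)$ and $Q_d$ (reachability plus the vacuous case when $q$ is not co-reachable to $F$) is also fine. The paper itself does not give a proof of Proposition~\ref{INI2NI}; it only states that the result reformulates~\cite{mullins00ai} in language-theoretic terms. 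So there is nothing substantive to compare against beyond noting that your decomposition is the natural one and mirrors the argument for Theorem~\ref{effectivecheck}.

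However, the conceptual framing in your first paragraph is wrong and should be dropped. INI is \emph{not} ``opacity of $L$ for $L$ w.r.t.\ $\pi_{o,d}$'': with $\varphi=L$, Definition~\ref{Opacity} requires that every $s\in L$ admit some $s'\in[s]_{o,d}^{L}$ with $s'\notin L$, which is impossible because $[s]_{o,d}^{L}\subseteq L$ by construction. A nonempty language is therefore never opaque for itself, and Theorem~\ref{effectivecheck} instantiated at $\varphi=L$ degenerates to the trivial equivalence ``false $\Leftrightarrow$ false'' rather than to Proposition~\ref{INI2NI}. (The opacity instance that \emph{does} encode INI is the one in Theorem~\ref{INI2Op}, with secret $\varphi=\{s\in L:\pi_{Low,Down}(s)\neq s\}$.) Fortunately you never use this framing: your actual proof starts from the inclusion $\pi_{o,d}(L)\subseteq L$ and proceeds independently. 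Just delete the claim that the Proposition is Theorems~\ref{check}/\ref{effectivecheck} specialized to $\varphi=L$, and keep the direct argument.
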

This result restates in a language-theoretic setting, a result due to~\cite{mullins00ai}.
As a first result in this section, with the aim to build a reduction from opacity of regular secrets w.r.t. an Orwellian projection to INI, we extend  a result due to~\cite{Mazare2008} and reducing  a degenerated form of opacity w.r.t. natural projection to NI.
\begin{theorem}
\label{Op2NI}
The opacity verification  problem of regular secrets  w.r.t. $\pi_o$ for regular languages is reducible to the NI verification problem for finite systems.
\end{theorem}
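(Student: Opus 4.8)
The plan is to map, computably, an instance of the opacity-w.r.t.-$\pi_o$ problem (a regular language $L$ and a regular secret $\varphi\subseteq L$, given by automata) to an instance of the NI problem (a finite LTS $G'$ with accepting set $F'$ and a partition of its alphabet into $High$ and $Low$) so that $\varphi$ is opaque for $L$ w.r.t. $\pi_o$ iff $L_{F'}(G')$ satisfies NI. Recall (the Remark after Definition~\ref{Disclosure}) that opacity of $\varphi$ for $L$ w.r.t. $\pi_o$ amounts to the language inclusion $\pi_o(\varphi)\subseteq\pi_o(L\setminus\varphi)$, and NI is the self-inclusion $\pi_{Low}(M)\subseteq M$; so all that is needed is a language $M$, built effectively from $L$ and $\varphi$, for which $\pi_{Low}(M)\subseteq M$ holds exactly when $\pi_o(\varphi)\subseteq\pi_o(L\setminus\varphi)$.

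First I would introduce auxiliary letters: a fresh symbol $\$$ and a disjoint copy $\overline{\Sigma_o}=\{\bar a : a\in\Sigma_o\}$ of the observable alphabet, together with the renaming morphism $\phi:\Sigma_o^\ast\to\overline{\Sigma_o}^\ast$, $a\mapsto\bar a$. Over $\Sigma'=\Sigma\cup\{\$\}\cup\overline{\Sigma_o}$, with $Low=\overline{\Sigma_o}$ and $High=\Sigma\cup\{\$\}$, I set
$$ M \;=\; \{\, \$ \cdot s \cdot \phi(\pi_o(s)) : s\in\varphi \,\} \;\cup\; \phi\big(\pi_o(L\setminus\varphi)\big). $$
Intuitively, each secret run $s$ is first replayed verbatim and then ``announces'' its observation $\pi_o(s)$ in the barred copy, while each non-secret run contributes only its barred announcement; the guard $\$$ only serves to keep the first component from ever being the empty word. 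Since $L$ and $\varphi$ are regular, so is $L\setminus\varphi$; since $\pi_o$ and $\phi$ are morphisms and concatenation and union preserve regularity, $M$ is regular, and one can effectively construct a finite LTS $G'=(\Sigma',Q',\delta',q_0')$ and $F'\subseteq Q'$ with $L_{F'}(G')=M$. This $(G',F')$, with the partition above, is the reduction.

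It then remains to verify correctness. Computing the projection, $\pi_{Low}$ deletes every letter of $\Sigma\cup\{\$\}$, so $\pi_{Low}(\$\cdot s\cdot\phi(\pi_o(s)))=\phi(\pi_o(s))$ and $\pi_{Low}$ is the identity on $\phi(\pi_o(L\setminus\varphi))\subseteq\overline{\Sigma_o}^\ast$; hence $\pi_{Low}(M)=\phi(\pi_o(\varphi))\cup\phi(\pi_o(L\setminus\varphi))$. On the other hand every word of $M$ coming from the first set begins with $\$\notin\overline{\Sigma_o}$, so $M\cap\overline{\Sigma_o}^\ast=\phi(\pi_o(L\setminus\varphi))$. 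Since $\phi(\pi_o(\varphi))\subseteq\overline{\Sigma_o}^\ast$ and $\phi(\pi_o(L\setminus\varphi))\subseteq M$ always, we get $\pi_{Low}(M)\subseteq M$ iff $\phi(\pi_o(\varphi))\subseteq\phi(\pi_o(L\setminus\varphi))$, and by injectivity of $\phi$ this is equivalent to $\pi_o(\varphi)\subseteq\pi_o(L\setminus\varphi)$, i.e.\ to opacity of $\varphi$ for $L$ w.r.t.\ $\pi_o$.

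The construction is largely bookkeeping, so I do not expect a serious obstacle; the one point that needs care — and the reason the naive attempt ($M=\varphi\cup\pi_o(L\setminus\varphi)$ with $Low=\Sigma_o$) fails — is that the announced observation must live on a separate, renamed alphabet. Otherwise $\pi_{Low}$ would fold the replayed run into the announcement, turning $\pi_o(s)$ into $\pi_o(s)\pi_o(s)$, and words with $\pi_o(s)\in\varphi$ would break the equivalence. The $\$$ guard is there only so that the degenerate case $\epsilon\in\varphi$ is handled uniformly, forcing $\epsilon\in\pi_o(L\setminus\varphi)$ under NI precisely when opacity requires it. Finally I would note that, restricted to the degenerate secrets considered in~\cite{Mazare2008}, this reduction recovers theirs, which is the ``side-effect'' generalization advertised in the introduction.
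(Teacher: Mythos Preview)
Your correctness argument is fine, but the reduction breaks at the regularity step: the language
\[
\{\,\$\cdot s\cdot\phi(\pi_o(s)) : s\in\varphi\,\}
\]
is in general \emph{not} regular, so no finite LTS recognizes your $M$. The claim that ``concatenation preserves regularity'' applies to $A\cdot B$ for fixed regular $A,B$; what you have is the \emph{diagonal} $\{s\cdot f(s):s\in\varphi\}$ for the morphism $f=\phi\circ\pi_o$, and such copy-languages are typically non-regular. Concretely, take $\Sigma=\Sigma_o=\{a\}$ and $\varphi=a^\ast$: then your first component is $\{\$\,a^n\bar a^{\,n}:n\ge 0\}$, which fails the pumping lemma.

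The irony is that the verbatim replay of $s$ plays no role whatsoever in your correctness computation --- you immediately erase all of it with $\pi_{Low}$. Drop it: set $M=\$\cdot\phi(\pi_o(\varphi))\,\cup\,\phi(\pi_o(L\setminus\varphi))$ (now a genuine concatenation/union of regular languages), and your argument goes through unchanged. At that point the barred copy and the $\$$ guard become superfluous as well: with a single fresh high letter $h$ one may simply take $M=\pi_o(\varphi)\cdot h\,\cup\,\pi_o(L\setminus\varphi)$ over $\Sigma_o\cup\{h\}$, with $Low=\Sigma_o$ and $High=\{h\}$. This is exactly the paper's construction $G^\flat$, realized at the automaton level by turning $\Sigma_u$-transitions into $\epsilon$-moves (so that layer~$0$ computes $\pi_o$) and appending an $h$-edge out of each $F_\varphi$-state into a fresh accepting copy.
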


\begin{proof}
Let $G= (\Sigma,Q,\delta,q_0)$ be a transition system and $F, F_{\varphi} \subseteq Q$  s.t. $L = L_F(G)$ and $\varphi = L_{F_{\varphi}}$.
We construct a labeled transition system  $G^\flat= (\Sigma^\flat,Q^\flat,\delta^\flat,q^\flat_0)$ and $F^\flat \subseteq Q^\flat$ s.t. 
\vspace*{-0.5em}
\begin{eqnarray*}
\Sigma^\flat& = & \Sigma_o \cupdot \{h\} \\
Q^\flat & = & (Q \times \{0\})  \cup (F_{\varphi} \times \{1\}) \\
\delta^\flat &  = &  \{((q, 0), \alpha, (q',0)) : (q, \alpha, q') \in \delta \mbox{ and } \alpha \in \Sigma_o\} \cup \\
&&  \{((q, 0), \epsilon , (q',0)) : (q, \alpha, q') \in \delta \mbox{ and } \alpha \in \Sigma_u\} \cup \\
&& \{((q, 0), h , (q,1)) : q \in F_{\varphi}\}  \\
q^\flat_0 & = & (q_0, 0) \\
F^\flat & = & (F \cap (Q \setminus F_{\varphi}) \times \{0\})\cup   (F_{\varphi} \times \{1\}).
\end{eqnarray*}
Now, we consider a non-interference problem for $G^\flat$ with the following partitioning:
$Low  =  \Sigma_o \cup \{\epsilon\}$ and $High  =  \{h\}$
%\begin{eqnarray*}
% Low & = & \Sigma_o \cup \{\epsilon\}\\
%   High & = & \{h\} 
% \end{eqnarray*}
 and finally, we show  that $\varphi$ is opaque for $L$ w.r.t. $\pi_o$ iff $L^\flat= L_{F^\flat}(G^\flat)$ satisfies NI. 
 
 $\Longrightarrow$: \, Suppose that $L^\flat = L_{F^\flat}(G^\flat)$ does not satisfies NI then there is  $s^\prime_0 \in  \pi_{Low}(L^\flat) = L_{F_{\varphi} \times \{0\}}(G^\flat)$ s.t. $s^\prime_0 \not \in L^\flat$.  Thus  $s^\prime_0\not  \in L_{F \cap (Q \setminus F_{\varphi})}(G^\flat)$. Hence,  for  $s_0 \in L_{F}(G)$ s.t. $\pi_o(s_0) = s^\prime_0$, $s_0 \in L_{F_{\varphi}}(G)$ but for any $s \not \in \varphi$, $\pi_o(s) \not = \pi_o(s_0)$, that is, $s_0$ discloses $\varphi$ w.r.t. $\pi_o$.
 
  $\Longleftarrow$: \, Suppose that $\varphi$ is not opaque for $L$ w.r.t. $\pi_o$ then for some $s_0 \in L$, $s_0$ discloses  the secret $\varphi$
 that is, $[s_0]^L_{\pi_o}  \subseteq \varphi$, $s^\prime_0 = \pi_o(s_0)  \cdot h \in L_{F_{\varphi} \times \{1\}}(G^\flat) \subseteq L^\flat$ and hence  $\pi_{Low}(s^\prime_0)=  \pi_o(s_0)  \in \pi_{Low}(L^\flat)$ and $\pi_{Low}(s^\prime_0)  \not\in L^\flat$ since  $ \pi_o(s_0)  \not \in L_{F \cap (Q \setminus F_{\varphi})}(G^\flat)$. Consequently $\pi_{Low}(L^\flat)\not \subseteq L^\flat$.
\end{proof}

%The following result, due to~\cite{Mazare2005} establishes a converse reduction:
%
%\begin{prop}
%\label{NI2Op}
%Let $G= (\Sigma,Q,\delta,q_0)$ be a transition system with $\Sigma = High \cupdot Low$, $F, \subseteq Q$  s.t. $L = L_F(G)$ and $\varphi = \{s \in L_F(G) : \pi_{Low}(s) \not = s\}$. $L$ satisfies NI iff $\varphi$ is opaque for $L$ w.r.t. $\pi_{Low}$.
%\end{prop}

Now, we build a  reduction of opacity w.r.t. an Orwellian projection  to INI by using the previous reduction as building blocks:
%To this end, we will use the previous contsructs as building blocks 

\begin{theorem}
\label{Op2INI}
Opacity verification of regular secrets w.r.t. $\pi_{o,d}$ for regular languages is reducible to INI verification of finite systems.

\end{theorem}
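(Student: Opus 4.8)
The plan is to chain together the three reductions already established in the paper. We have Theorem~\ref{effectivecheck}, which reduces opacity w.r.t. $\pi_{o,d}$ for a regular secret $\varphi \subseteq L$ to a finite family of opacity problems w.r.t. natural projection $\pi_o$: namely, $\varphi$ is opaque for $L$ w.r.t.\ $\pi_{o,d}$ iff for every $q \in Q_d$, the regular secret $L_{F_\varphi}(G^q \setminus \Sigma_d)$ is opaque for $L_F(G^q \setminus \Sigma_d)$ w.r.t.\ $\pi_o$. Each of these is a single instance of the opacity-w.r.t.-$\pi_o$ verification problem for finite systems, and by Theorem~\ref{Op2NI} each is in turn reducible to an instance of the NI verification problem: concretely, apply the $G \mapsto G^\flat$ construction of that proof to $G^q \setminus \Sigma_d$ with $F$ and $F_\varphi$ restricted accordingly, obtaining a finite system $(G^q \setminus \Sigma_d)^\flat$ over alphabet $\Sigma_o \cupdot \{h\}$ with a designated accepting set, such that the corresponding language satisfies NI iff the opacity instance holds.

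**Next I would** assemble these $|Q_d|$ individual NI instances into a single INI instance, using Proposition~\ref{INI2NI} (which is the INI $\leftrightarrow$ family-of-NI correspondence) run in reverse. The idea is to build one finite LTS $G^\sharp$ with alphabet partitioned into $High$, $Low$, and a $Down$ alphabet, such that the states reachable right after a downgrading transition (i.e.\ the set "$Q_d$" for $G^\sharp$) are in bijection with $Q_d$ of the original $G$, and such that $(G^\sharp)^q \setminus Down$ — for $q$ the copy of the original $q \in Q_d$ — is (isomorphic to) the NI-system $(G^q \setminus \Sigma_d)^\flat$ produced above. The natural construction: take $\Sigma_d$ itself (or a single fresh downgrading letter $d$) as $Down$, keep the original $d$-transitions so that a downgrading move lands in the state whose associated "after-$d$" subsystem we want, then from each such state $q \in Q_d$ graft on a disjoint copy of $(G^q \setminus \Sigma_d)^\flat$, reachable only through a downgrading edge; the initial state must itself be treated as an element of $Q_d$ (matching the $\{q_0\}$ clause in the definition of $Q_d$), so we also graft $(G^{q_0} \setminus \Sigma_d)^\flat$ at the start. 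One then checks that the accepting sets can be chosen so that $L_{F^\sharp}(G^\sharp)$ satisfies INI iff every grafted component satisfies NI (Proposition~\ref{INI2NI}) iff every opacity-w.r.t.-$\pi_o$ instance holds (Theorem~\ref{Op2NI}) iff $\varphi$ is opaque for $L$ w.r.t.\ $\pi_{o,d}$ (Theorem~\ref{effectivecheck}). Since all pieces are finite and the construction is effective, this is a genuine reduction.

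**The main obstacle** will be the bookkeeping needed to make the grafted copies coexist inside one LTS without interfering with each other's NI status: the $High$/$Low$ partition must be globally consistent (all the $\epsilon$- and $\Sigma_o$-relabelled events become $Low$, the fresh $h$-events become $High$, the downgrading events become $Down$), yet each $(G^q \setminus \Sigma_d)^\flat$ component must remain "insulated" so that an observer projecting onto $Low \cup Down$ and restricting by removing $Down$ sees exactly that component from the state $q$ — in particular, the downgrading edges must be the only way into the components, and the components must not be re-entered by further downgrading, or else $(G^\sharp)^q\setminus Down$ would pick up spurious behaviour. A clean way to enforce this is to have each downgrading transition reset into a component indexed by its target state and let the components be terminal with respect to $Down$ (no outgoing $Down$). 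The other delicate point is the degenerate inclusion of $q_0$ in $Q_d$: it must be handled uniformly, which the construction above does by simply grafting the $q_0$-component at the root. Verifying the accepting-set equalities $L_{F^\sharp}((G^\sharp)^q \setminus Down) = L_{F^\flat}((G^q\setminus\Sigma_d)^\flat)$ and then invoking the chain of theorems is then routine.
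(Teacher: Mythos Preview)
Your high-level strategy---chain Theorem~\ref{effectivecheck}, Theorem~\ref{Op2NI}, and Proposition~\ref{INI2NI}---is exactly the paper's approach. The difference is in how the single INI instance is built.

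The paper does \emph{not} graft disjoint copies of $(G^q\setminus\Sigma_d)^\flat$ for each $q\in Q_d$. Instead it applies the $\flat$-style construction once, \emph{globally}, to $G$ while keeping the $\Sigma_d$-transitions intact: take $\Sigma^\natural=\Sigma_o\cupdot\Sigma_d\cupdot\{h\}$, states $(Q\times\{0\})\cup(F_\varphi\times\{1\})$, keep $\Sigma_o\cup\Sigma_d$-transitions on the $0$-layer, turn $\Sigma_u$-transitions into $\epsilon$, add $h$-edges $(q,0)\to(q,1)$ for $q\in F_\varphi$, and set $Low=\Sigma_o$, $Down=\Sigma_d$, $High=\{h\}$. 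The crucial and almost-free observation is then
\[
(G^q\setminus\Sigma_d)^\flat \;=\; (G^\natural)^{(q,0)}\setminus Down
\qquad\text{and}\qquad
Q^\natural_d = Q_d\times\{0\},
\]
so the NI instances produced by Theorem~\ref{Op2NI} are \emph{already} the post-downgrade subsystems of $G^\natural$, and Proposition~\ref{INI2NI} applies directly. No gluing, no bookkeeping, no insulation of components.

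Your grafting construction, as written, has a coherence problem you have not resolved: you require the components $(G^q\setminus\Sigma_d)^\flat$ to be ``terminal with respect to $Down$ (no outgoing $Down$)'', yet you also want to ``keep the original $d$-transitions'' so that every $q\in Q_d$ is reachable. These are in tension: the $d$-transitions must originate from \emph{inside} some component (since after the first step every reachable state lies in a grafted copy), so the copies cannot all be $Down$-terminal if any state in $Q_d$ requires two or more downgrading events to reach. Fixing this forces you to add $Down$-edges out of the $0$-layer of each component toward other components' entry points---at which point, up to the disjoint-union bookkeeping, you have essentially rebuilt $G^\natural$. So the paper's construction is not just tidier; it is what your approach collapses to once the reachability issue is handled.
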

\begin{proof}
Let $G= (\Sigma,Q,\delta,q_0)$ be a transition system and $F, F_{\varphi} \subseteq Q$  s.t. $L = L_F(G)$ and $\varphi = L_{F_{\varphi}}$.
%s.t. $\varphi$ is opaque for $L(G)$ w.r.t.  $\pi_{o,d}$. 
We construct a labeled transition system  $G^\natural= (\Sigma^\natural,Q^\natural,\delta^\natural,q^\natural_0)$ and $F^\natural \subseteq Q'$ s.t. 
\vspace*{-.5em}
\begin{eqnarray*}
\Sigma^\natural & = & \Sigma_o \cupdot  \Sigma_d \cupdot \{h\} \\
Q^\natural  & = & (Q \times \{0\})  \cup (F_{\varphi} \times \{1\}) \\
\delta^\natural &  = &  \{((q, 0), \alpha, (q',0)) : (q, \alpha, q') \in \delta \mbox{ and } \alpha \in \Sigma_o \cup \Sigma_d\} \cup \\
&&  \{((q, 0), \epsilon , (q',0)) : (q, \alpha, q') \in \delta \mbox{ and } \alpha \in \Sigma_u\} \cup \\
&& \{((q, 0), h , (q,1)) : q \in F_{\varphi}\} \\
%\cup \\
%&&  \{((q, 1), \sigma , (q',0 )) : q \in  F_{\varphi}, \sigma \in \Sigma_d  \mbox{ and }  (q, \sigma, q') \in \delta\}\\
q^\natural_0 & = & (q_0, 0) \\
F^\natural & = & (F \cap (Q \setminus F_{\varphi}) \times \{0\})\cup   (F_{\varphi} \times \{1\}).
\end{eqnarray*}
Now, we consider an intransitive non-interference problem for $G^\natural$ with the partitioning  $Low  =  \Sigma_o$, 
  $Down  =  \Sigma_d$ and 
   $High  = \{h\}$, and we show  that $\varphi$ is opaque for $L$ w.r.t. $\pi_{o,d}$ iff $L^\natural= L_{F^\natural}(G^\natural)$ satisfies INI. 
 %Without loss of generality, we assume that $G'$ is without $\epsilon$-transitions and deterministic.
 It has first to be noted that, by construction,  for any $q \in Q_d$, 
 \begin{equation}
 \label{eq1}
 L_{F^\flat}((G^q \setminus \Sigma_d)^\flat) = L_{F^\natural}({G^\natural}^{(q, 0)} \setminus Down)
 \end{equation}
 and second,  that for any $q  \in F_{\varphi}$, $(q, 1) \not \in Q^\natural_d$ since the only transition going into these states is an $h$-transition. Hence,
 \begin{equation}
 \label{eq2}
Q^\natural_d =  Q_d \times \{0\}.
 \end{equation}
 %as $G^\natural$ is obtained by ``gluing'' all together
 
 Also, $\varphi  \mbox{ is  opaque for } L  \mbox{ w.r.t. } \pi_{o,d}$ iff   for any ${q \in Q_d}$,
  \begin{align*}
L_{F_{\varphi}}(G^q \setminus \Sigma_d) \mbox{ is opaque for } L_F(G^q \setminus \Sigma_d) \mbox{ w.r.t. } \pi_o \\
 \mbox{(by Theorem~\ref{effectivecheck})}  & \Longleftrightarrow\\
 L_{F^\flat}((G^q \setminus \Sigma_d)^\flat) \mbox{ satisfies NI }  \mbox{(by Theorem~\ref{Op2NI})}   &  \Longleftrightarrow \\
 L_{F^\natural}({G^\natural}^{(q, 0)} \setminus Down) \mbox{ satisfies NI } \mbox{(by Eq.~\ref{eq1})} &  \Longleftrightarrow
 \end{align*}
 for all ${q \in Q^\natural_d}$,
   \begin{align*}
   L_{F^\natural}({G^\natural}^{q} \setminus Down) \mbox{ satisfies NI } & \mbox{(by Eq.~\ref{eq2})} &   \Longleftrightarrow\\
  L_{F^\natural}({G^\natural})  \mbox{ satisfies INI } &  \mbox{(by Proposition~\ref{INI2NI})}   \\
 \end{align*}
% \begin{align*}
%\varphi  \mbox{ is  opaque for } L  \mbox{ w.r.t. } \pi_{o,d}   \Longleftrightarrow \\
%\forall_{q \in Q_d} L_{F_{\varphi}}(G^q \setminus \Sigma_d) \mbox{ is opaque for } L_F(G^q \setminus \Sigma_d) \mbox{ w.r.t. } \pi_o \\
% \mbox{(by Theorem~\ref{effectivecheck})}  \Longleftrightarrow\\
%\forall_{q \in Q_d} L_{F^\flat}((G^q \setminus \Sigma_d)^\flat) \mbox{ satisfies NI } \\
% \mbox{(by Theorem~\ref{Op2NI})}   \Longleftrightarrow\\
%\forall_{q \in Q_d}  L_{F^\natural}({G^\natural}^{(q, 0)} \setminus Down) \mbox{ satisfies NI } \\ \mbox{(by Eq.~\ref{eq1})}  \Longleftrightarrow\\
%\forall_{q \in Q^\natural_d}  L_{F^\natural}({G^\natural}^{q} \setminus Down) \mbox{ satisfies NI } \\ \mbox{(by Eq.~\ref{eq2})}   \Longleftrightarrow\\
%  L_{F^\natural}({G^\natural})  \mbox{ satisfies INI } \\
%   \mbox{(by Proposition~\ref{INI2NI})}   \\
%\end{align*}
 
% $\Longrightarrow$: \,  Suppose that $\varphi$ is  opaque for $L$ w.r.t. $\pi_{\Sigma_o, \Sigma_d}$.
% 
%   $\Longleftarrow$: \, Suppose that $L^\natural= L_{F^\natural}(G^\natural)$  satisfies INI.

\end{proof}

Finally, we define a reverse reduction extending a similar reduction defined in~\cite{Mazare2008} from NI to opacity w.r.t. natural projection.
\begin{theorem}
\label{INI2Op}
INI verification for finite systems is reducible to   opacity verification of regular secrets w.r.t. $\pi_{Low, Down}$ for regular languages.
\end{theorem}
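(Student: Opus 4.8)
\textbf{Proof proposal for Theorem~\ref{INI2Op}.}

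The plan is to build, from an arbitrary finite LTS $G=(\Sigma,Q,\delta,q_0)$ with $F\subseteq Q$ and $L=L_F(G)$ and the INI partition $\Sigma = Low \cupdot High \cupdot Down$, a regular language $L'$ and a regular secret $\varphi'$ such that $L$ satisfies INI if and only if $\varphi'$ is opaque for $L'$ w.r.t. $\pi_{Low, Down}$. The natural choice, mirroring the reduction from NI to opacity in~\cite{Mazare2008}, is to take $L' = L$ together with a marker mechanism that flags the ``interfered'' traces: set $\varphi'$ to be the set of traces of $L$ whose $High$-content is non-trivial, i.e. the traces $u\in L$ such that $\pi_{Low,Down}(u)\notin L$ would distinguish them. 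More precisely, I would first reformulate INI via Theorem~\ref{check}/Proposition~\ref{INI2NI}: $L$ satisfies INI iff for every $q\in Q_d$, $L_F(G^q\setminus Down)$ satisfies NI, i.e. $\pi_{Low}(L_F(G^q\setminus Down))\subseteq L_F(G^q\setminus Down)$. Comparing this with Theorem~\ref{effectivecheck}, opacity of $\varphi'$ for $L'$ w.r.t. $\pi_{Low,Down}$ is equivalent to: for every $q\in Q_d$, $C(q)$-restricted secret is opaque for $C(q)$-restricted language w.r.t. $\pi_{Low}$. So the whole reduction factors through the downgrading states and reduces to the ``per-segment'' claim that NI of a $Down$-free fragment is equivalent to opacity of an appropriate secret in that fragment w.r.t. $\pi_{Low}$ — which is exactly the content of the reverse direction of Theorem~\ref{Op2NI}.

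Concretely, the key steps in order: (1) State the target construction: $L' := L$ over the alphabet $\Sigma$ (with the given $Low/Down$ split and $Low' := Low$, $Down' := Down$ for the Orwellian projection $\pi_{Low,Down}$), and take the secret $\varphi'$ to be $L \cap (\Sigma^\ast High\, \Sigma^\ast)$ — the traces that actually contain a high event — intersected, on each $Down$-free segment, with the traces that have no ``low-only twin''. Actually the cleanest is to run the $G^\flat$-style construction of Theorem~\ref{Op2NI} in reverse: we already know from the $\Longleftarrow$ direction proof of Theorem~\ref{Op2NI} how to encode NI as opacity, so I would dualize it to produce $\varphi'$ explicitly as a regular language recognized by a product automaton that tracks, in a second component, whether a high-action has been seen since the last downgrade. (2) Show $\varphi'$ and $L'$ are regular and effectively constructible from $G$ — routine, since they are obtained by products and intersections of $G$ with fixed finite automata. (3) Invoke Theorem~\ref{effectivecheck} to rewrite ``$\varphi'$ opaque for $L'$ w.r.t. $\pi_{Low,Down}$'' as the conjunction over $q\in Q_d$ of ``$L_{F_{\varphi'}}(G'^q\setminus Down)$ opaque for $L_{F}(G'^q\setminus Down)$ w.r.t. $\pi_{Low}$''. (4) For each such $q$, apply the equivalence of Theorem~\ref{Op2NI} (or rather its proof's explicit correspondence) to turn each opacity-w.r.t.-$\pi_{Low}$ statement into ``$L_F(G^q\setminus Down)$ satisfies NI''. (5) Conclude by Proposition~\ref{INI2NI} that the conjunction over $q\in Q_d$ of these NI statements is exactly ``$L$ satisfies INI''. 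Chaining (3)–(5) gives the desired biconditional.

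The main obstacle I anticipate is step (1): getting the secret $\varphi'$ defined so that, when restricted to each $Down$-free fragment $C(q,L')$, it coincides on the nose with the secret that the $G^\flat$-construction of Theorem~\ref{Op2NI} associates to the NI problem for $L_F(G^q\setminus Down)$ — uniformly in $q$, using a single fixed automaton added on top of $G$. The delicate point is that the $G^\flat$ construction in Theorem~\ref{Op2NI} adds an $h$-transition out of every $\varphi$-final state, whereas here we have no pre-given $\varphi$; instead the ``secret'' must be synthesized as ``traces that have a high event but no low-equivalent high-free twin in the same $Down$-segment.'' Making this a genuinely regular secret, and checking that the marker component resets correctly at each $Down$-transition so that Theorem~\ref{effectivecheck} slices it the right way, is where the real bookkeeping lives. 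Once the construction is pinned down, the equivalence itself is a mechanical composition of Theorems~\ref{effectivecheck}, \ref{Op2NI} and Proposition~\ref{INI2NI}, exactly dual to the chain of equivalences displayed in the proof of Theorem~\ref{Op2INI}.
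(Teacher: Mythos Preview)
Your proposal circles around the right idea but never commits to it, and the one concrete secret you write down is wrong. You suggest $\varphi' = L \cap (\Sigma^\ast High\,\Sigma^\ast)$, but consider any $s = s_1 h s_2 d t \in L$ with $t \in Low^\ast$: every element of $[s]_{Low,Down}^L$ has the prefix $s_1 h s_2 d$ and hence lies in $\varphi'$, so $\varphi'$ is never opaque as soon as such an $s$ exists, regardless of whether $L$ satisfies INI. You sense this and start hedging (``intersected, on each $Down$-free segment, with the traces that have no low-only twin''), but that is not the fix. The correct secret is the one you mention only parenthetically: traces in which a $High$ event occurs \emph{after the last downgrade}, i.e.
\[
\varphi \;=\; \{\, s \in L : \pi_{Low,Down}(s) \neq s \,\} \;=\; L \cap \Sigma^\ast\, High\, (Low \cup High)^\ast,
\]
which is manifestly regular.

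With this $\varphi$ the paper's proof is direct and does not use Theorem~\ref{effectivecheck}, Theorem~\ref{Op2NI} or Proposition~\ref{INI2NI} at all. If $L$ fails INI, pick $u_0 \in L$ with $\pi_{Low,Down}(u_0) \notin L$; then $u_0 \in \varphi$, and every $u' \in [u_0]_{Low,Down}^L$ satisfies $\pi_{Low,Down}(u') = \pi_{Low,Down}(u_0) \neq u'$, so $[u_0]_{Low,Down}^L \subseteq \varphi$ and $u_0$ discloses $\varphi$. Conversely, if $L$ satisfies INI and $s \in \varphi$, then $\pi_{Low,Down}(s) \in L$ (by INI) and $\pi_{Low,Down}(\pi_{Low,Down}(s)) = \pi_{Low,Down}(s)$, so $\pi_{Low,Down}(s) \in [s]_{Low,Down}^L \setminus \varphi$. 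That is the whole argument. Your plan to route everything through the per-state decomposition and then invoke Theorem~\ref{Op2NI} is both unnecessary and mis-oriented: Theorem~\ref{Op2NI} reduces opacity \emph{to} NI, not the other way, so step~(4) of your outline cannot be carried out by citing it.
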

\begin{proof}
Let $G= (\Sigma,Q,\delta,q_0)$ be a transition system with $\Sigma = High \cupdot Low \cupdot Down$, $F, \subseteq Q$  s.t. $L = L_F(G)$ and $\varphi = \{s \in L_F(G) : \pi_{Low, Down}(s) \not = s\}$. We show that $L$ satisfies INI iff $\varphi$ is opaque for $L$ w.r.t. $\pi_{Low, Down}$.

  $\Longrightarrow$:  \,  Suppose that $L= L_{F}(G)$ does not satisfies INI, then for some $u_0 \in L$, $\pi_{o,d}(u_0) \not \in L$. Hence, $u_0 \in \varphi$ (since $\pi_{o,d}(u_0) \not= u_0$) and $[u_0]_{o,d}^L \subseteq \varphi$ (otherwise, $\pi_{o,d}(u_0) \in L$, getting a contradiction). Consequently, $u_0$ discloses $\varphi$.

% $\Longleftarrow$:  Suppose that $\varphi$ is not opaque for $L$ w.r.t. $\pi_{o,d}$ then for some $s_0 \in \varphi$, $s_0$ discloses  the secret $\varphi$ that is, pour any $s \in L$ s.t. $\pi_{o,d}(s) = \pi_{o,d}(s_0)$ we have $\pi_{o,d}(s) \not= s$. Hence $\pi_{o,d}(s_0) \not\in L$ otherwise  \red{\,\, ({\em To complete})}

  $\Longleftarrow$: \, Suppose that $s \in \varphi$, then $\pi_{o,d}(s) \not = s$ and also,  there is an $s' \in L$ s.t. $\pi_{o,d}(s') = \pi_{o,d}(s)$ and $\pi_{o,d}(s') \in L$ since $L$ satisfies INI. Moreover  $\pi_{o,d}(s')  \not \in \varphi$ since $\pi_{o,d}(\pi_{o,d}(s')) = \pi_{o,d}(s')$. Thus $\varphi$ is opaque w.r.t. $\pi_{o,d}$ for $L$. 
  %and $\pi_{o,d}(\pi_{o,d}(s')) = \pi_{o,d}(s)$ since $\pi_{o,d}(\pi_{o,d}(s)) = \pi_{o,d}(s)$. 
  %\red{\,\, ({\em To revise})}

%

%Suppose that $\varphi$ is opaque for $L$ w.r.t. $\pi_{o,d}$ and $s_0$ s.t. $\pi_{o,d}(s_0) \not= s_0$  then there is $s'$ s.t. $\pi_{o,d}(s') = s'$ and $\pi_{o,d}(s') = \pi_{o,d}(s_0)$. Thus $\pi_{o,d}(s_0) \in L$. As a result, $G$ satisfies INI.

%
%$\varphi = \bugcup_{q \in Q^_d}
%
% \begin{eqnarray*}
%\varphi  \mbox{ is  opaque for } L  \mbox{ w.r.t. } \pi_{Low, Down} &  & \Longleftrightarrow\\
%\forall_{q \in Q_d} L_{F_{\varphi}}(G^q \setminus Down) \mbox{ is opaque for } L_F(G^q \setminus Down) \mbox{ w.r.t. } \pi_{Low} & \mbox{(by Theorem~\ref{effectivecheck})}  & \Longleftrightarrow\\
%  L_{F^\natural}({G^\natural})  \mbox{ satisfies INI } & \mbox{(by Proposition~\ref{INI2NI})}  & \\
%\end{eqnarray*}
\end{proof}

\section{Conclusion}
In this paper, we have investigated the opacity verification problem in the context of finite systems, regular secrets and a class of Orwellian observation functions that we called Orwellian projections. As an illustration of the relevancy of this problem in the context of the verification of information flow properties in the domain of security-critical systems, we have related opacity w.r.t. Orwellian projections to INI for finite systems by showing a computational equivalence between both notions, providing , as a side effect a characterization of NI for finite systems with opacity of regular secrets w.r.t. natural projections  for regular languages. 

We are now investigating the opacity synthesis problem consisting of compute the supremal opaque sublanguage w.r.t. Orwellian projection of a given language, and its dual problem, which is also very challenging since, in this case, there is a large range of modifications to the initial system that can be considered, e.g., enlarging the behavior of the non-secret part inserting suitable downgrading actions whenever possible or cutting some possible secret behaviors. In future works, we will investigate the problem of supervisory control for opacity w.r.t. Orwellian projections along a line of research initiated by~\cite{Dub2010} for opacity w.r.t. natural projections. We also plan to instantiate opacity to {\em Intransitive Non-interference with Selective Declassification} (INISD) which has been suggested recently in~\cite{Gorrieri2011}. INISD generalises INI  by allowing to each downgrading action $d$ to declassify only a subset $H(d)$ of non-observable events, which is more likely to be of practical interest. A structural definition of this property for Petri nets  has been proposed and its decidability has been investigated in~\cite{eike-darondeau2012}.

\bibliographystyle{splncs}
%\bibliography{IEEEabrv,OpacityControl}

%\input{wodes-2.bbl}

\end{document}